\newif\ifIACRCC
\IACRCCtrue

\PassOptionsToPackage{hypertexnames=false}{hyperref}

\ifIACRCC
\documentclass[
  version=preprint,
  biblatex
]{iacrcc}
\else
\documentclass{article}
\fi

\usepackage[utf8]{inputenc}
\usepackage[T1]{fontenc}

\ifIACRCC
\else
\usepackage{authblk}
\usepackage[
  backend=biber,
  minalphanames=3,
  maxalphanames=4,
  maxbibnames=99,
  style=alphabetic
]{biblatex}
\fi

\usepackage{csquotes}
\usepackage[a4paper]{geometry}
\usepackage[acronym]{glossaries}
\usepackage{graphicx}
\usepackage{xcolor}
\usepackage{amsmath,amssymb}
\ifIACRCC
\else
\usepackage{hyperref}
\fi
\usepackage{upgreek}
\usepackage[english]{babel}
\usepackage{amsthm}
\usepackage{amssymb}
\usepackage{stmaryrd}
\usepackage{mathtools}

\usepackage{tikz}
\usepackage{standalone}
\usetikzlibrary{calc,fit,positioning,decorations.pathreplacing}
\tikzset{>=latex}
\usepackage{subcaption}
\usepackage{rotating}
\ifIACRCC
\license{CC-by}
\fi

\usepackage{listings}
\usepackage{xcolor}
\usepackage[noend]{algorithm}
\usepackage[noend]{algpseudocode}

\makeatletter
\patchcmd{\lst@MakeCaption}{\addcontentsline{lol}}{\addcontentsline{loa}}{}{}
\patchcmd{\lst@MakeCaption}{\addcontentsline{lol}}{\addcontentsline{loa}}{}{}
\AtBeginDocument{%
  \let\c@lstlisting\c@algorithm

}
\makeatother

\floatname{algorithm}{Listing}
\algblockdefx{oGame}{oEndGame}%
[2]{\textbf{game} #1(#2)}%
{\textbf{end game}}
\algblockdefx{oQuery}{oEndQuery}%
[2]{\textbf{oracle} #1(#2)}%
{\textbf{end oracle}}
\makeatletter
\ifthenelse{\equal{\ALG@noend}{t}}%
  {\algtext*{oEndGame}}
  {}%
\ifthenelse{\equal{\ALG@noend}{t}}%
  {\algtext*{oEndQuery}}
  {}%
\makeatother

\usepackage{booktabs}

\ifIACRCC
\else
\usepackage[
  capitalise,
  nameinlink,
  noabbrev,
]{cleveref}
\crefname{algorithm}{Listing}{Listings}  
\fi

\lstdefinestyle{sage}{
  language=Python,
  morekeywords={sage,ZZ,GF,Matrix},
  basicstyle=\ttfamily\small,
  keywordstyle=\color{blue},
  commentstyle=\color{gray},
  stringstyle=\color{orange},
  showstringspaces=false,
  frame=single,
  breaklines=true
}

\DeclareMathOperator{\MAC}{MAC}

\DeclarePairedDelimiter{\set}{\{}{\}}
\DeclarePairedDelimiter{\ind}{\llbracket}{\rrbracket}

\newcommand{\Enc}{\ensuremath{\mathsf{Encode}}}
\newcommand{\Dec}{\ensuremath{\mathsf{Decode}}}
\newcommand{\Encrypt}{\ensuremath{\mathsf{Encrypt}}}
\newcommand{\Decrypt}{\ensuremath{\mathsf{Decrypt}}}
\newcommand{\Share}{\ensuremath{\mathsf{Share}}}
\newcommand{\Rec}{\ensuremath{\mathsf{Recover}}}

\newcommand{\orwrite}[1]{{\normalfont\textsc{#1}}}

\newcommand{\oCorruptS}{\orwrite{SSS.Corrupt}}
\newcommand{\oCorruptR}{\orwrite{R.Corrupt}}
\newcommand{\oRelay}{\orwrite{Relay}}
\newcommand{\oIndSSS}{\orwrite{Ind-SSS}}
\newcommand{\oShiftSSS}{\orwrite{Shift-Robust-SSS}}
\newcommand{\oIndRelay}{\orwrite{Ind-Relay}}
\newcommand{\oForgeRelay}{\orwrite{Forge-Relay}}

\newcommand{\cs}[1]{{\textcolor{magenta}{[CS: #1]}}}

\newcommand{\getr}{\ensuremath{\overset{\$}{\gets}}}

\newcommand{\vect}[1]{\vec{#1}}
\newcommand{\qij}[3]{\ensuremath{q_{#1#2}}}
\newcommand{\cij}[3]{\ensuremath{c_{#1#2}}}
\newcommand{\cpij}[3]{\ensuremath{c'_{#1#2}}}

\newacronym{its}{ITS}{Information Theoretically Secure}
\newacronym{itsy}{ITS}{Information-Theoretic Security}
\newacronym{mac}{MAC}{Message Authentication Code}
\newacronym{otp}{OTP}{One-Time Pad}
\newacronym{qkd}{QKD}{Quantum Key Distribution}
\newacronym{vss}{VSS}{Verifiable Secret Sharing}
\newacronym{amd}{AMD}{Algebraic Manipulation Detection}
\newacronym{ake}{AKE}{Authenticated Key Exchange}

\ifIACRCC
\newtheorem{construction}{Construction}
\else
\newtheorem{theorem}{Theorem}
\newtheorem{lemma}[theorem]{Lemma}
\newtheorem{definition}[theorem]{Definition}
\newtheorem{construction}[theorem]{Construction}
\fi

\makeatletter
\newcommand{\anonurl}[1]{%
  \if@anonymous
    \texttt{[URL hidden for anonymous submission]}%
  \else
    \url{#1}%
  \fi
}
\makeatother

\ifIACRCC
\title[
  running = {Integrity from AMD in trusted repeater networks},
]{Integrity from Algebraic Manipulation Detection in Trusted-Repeater QKD Networks}
  \addauthor[
    orcid = {0009-0006-3158-6202},
    inst = {1,2},
    footnote = {Funded by the Dutch National Science Agenda (Nationale Wetenschapsagenda, NWA) by The Dutch Research Council (NWO) under project number 1436.20.002, Quantum Impact on Societal Security.},
    surname = {Robertson}
  ]{Ailsa Robertson}
  \addauthor[
    orcid = {0000-0002-1754-1415},
    inst = {1,2},
    footnote = {Partially supported by gravitation project Challenges in Cyber Security (CiCS) with file number 024.006.037 which is financed by the NWO.},
    surname = {Schaffner}
  ]{Christian Schaffner}
  \addauthor[
    orcid = {0009-0007-4379-2983},
    footnote = {Supported by the Dutch National Growth Fund (NGF), as part of the Quantum Delta NL programme.},
    inst = {1,2},
    email = {s.r.verschoor@uva.nl},
    surname = {Verschoor}
  ]{Sebastian R. Verschoor}
  \addaffiliation[ror = 04dkp9463,
    country={The Netherlands}]{University of Amsterdam}
  \addaffiliation[ror = 00zq3ce72,
    country={The Netherlands}]{QuSoft}
\keywords{QKD,
    quantum key distribution,
    secret sharing,
    authentication codes,
    trusted repeater networks,
    information theoretic security,
    provable security
}
\else
  \title{Integrity from Algebraic Manipulation Detection in Trusted-Repeater QKD Networks}
  \author[1,2]{Ailsa Robertson}
  \author[1,2]{Christian Schaffner}
  \author[1,2]{Sebastian R. Verschoor}
  \affil[1]{University of Amsterdam}
  \affil[2]{QuSoft}
\fi

\begin{document}
\maketitle
\begin{abstract}
    \Gls{qkd} allows secure communication without relying on computational assumptions,
    but can currently only be deployed over relatively short distances due to hardware constraints.
    To extend \gls{qkd} over long distances, networks of trusted repeater nodes can be used,
    wherein \gls{qkd} is executed between neighbouring nodes
    and messages between non-neighbouring nodes are forwarded using a relay protocol.
    Although these networks are being deployed worldwide,
    no protocol exists which provides provable guarantees of integrity against manipulation from both external adversaries and corrupted intermediates.
    In this work, we present the first protocol that provably provides
    both confidentiality and integrity.
    Our protocol combines an existing cryptographic technique, \gls{amd} codes, with
    multi-path relaying over trusted repeater networks.
    This protocol achieves \gls{itsy} against the detection of manipulation,
    which we prove formally through a sequence of games.
\end{abstract}
\ifIACRCC
\begin{textabstract}
    Quantum Key Distribution (QKD) allows secure communication without relying on computational assumptions,
    but can currently only be deployed over relatively short distances due to hardware constraints.
    To extend QKD over long distances, networks of trusted repeater nodes can be used,
    wherein QKD is executed between neighbouring nodes
    and messages between non-neighbouring nodes are forwarded using a relay protocol.
    Although these networks are being deployed worldwide,
    no protocol exists which provides provable guarantees of integrity against manipulation from both external adversaries and corrupted intermediates.
    In this work, we present the first protocol that provably provides
    both confidentiality and integrity.
    Our protocol combines an existing cryptographic technique, Algebraic Manipulation Detection (AMD) codes, with
    multi-path relaying over trusted repeater networks.
    This protocol achieves Information-Theoretic Security (ITS) against the detection of manipulation,
    which we prove formally through a sequence of games.
\end{textabstract}
\fi
\section{Introduction}\label{sec:introduction}

Parties that can exchange quantum information can use \acrfull{qkd}
to expand a small shared secret into an arbitrarily long secret~\cite{bb84}.
The effectiveness of such quantum channels, both in terms of distance and rate of transmission, is currently restricted by hardware limitations.
While \emph{quantum repeaters} are able to lift this restriction
and establish long-distance end-to-end quantum channels,
their realisation is currently restricted to laboratory demonstrations
and cannot yet be deployed at scale in a quantum network. 

An alternative method for establishing keys over a long distance is to use trusted repeater networks~\cite{SPD+10}.
In such networks, \gls{qkd} is executed between neighbouring nodes of the network,
and messages between non-neighbouring nodes are then established via a \emph{relay} protocol.
In the simplest relay protocol, intermediate neighbours decrypt and re-encrypt:
if Alice wants to send a message to Bob,
she can encrypt and send it to Ingrid (an intermediate node) using a \gls{qkd}-generated key.
Ingrid can decrypt to recover the message and then re-encrypt and send it to Bob.
If more intermediate nodes are required to reach the distance,
each node can act as Ingrid and re-encrypt the key towards the next node.
Relaying does not provide end-to-end security,
since intermediate nodes must be trusted to forget and securely delete
the secret information after relaying.

To reduce the level of trust required of intermediate nodes,
parties can use a multi-path relaying protocol.
Assume for example that Alice and Bob do not share a direct \gls{qkd} link,
but both have a \gls{qkd} link with both Ingrid and Inez.
Alice generates two random secrets $S_1$ and $S_2$,
and relays both to Bob: she relays $S_1$ via Ingrid and $S_2$ via Inez.
Both Alice and Bob then output $S_1 \oplus S_2$ as their shared secret.
In order to learn the secret output, the adversary would have to corrupt both Ingrid and Inez.

This generalises to a network with $n$ disjoint paths,
each consisting of one or more intermediate nodes, shown in \cref{fig:graph}.
Alice splits her secret into $n$ shares using a secret-sharing scheme
and relays one share per path
from which Bob can recover the secret.

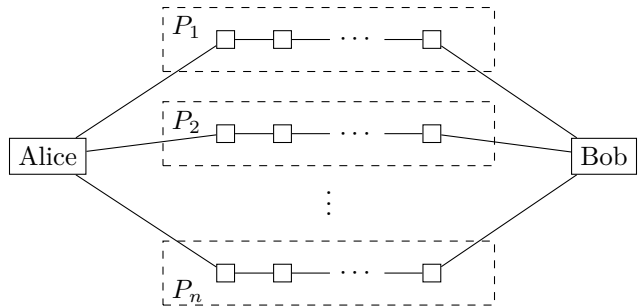
\begin{figure}
    \centering
    \begin{tikzpicture}[node distance=5mm]
    \node[draw] (n11) {};
    \node[draw, right=of n11] (n12) {};
    \node[right=of n12] (n13) {$\cdots$};
    \node[draw, right=of n13] (n14) {};
    \node[draw, dashed, inner xsep=7mm, inner ysep=3mm, anchor=north west, fit=(n11) (n14)] (p1) {};
    \node[anchor=north west] at (p1.north west) {$P_1$};

    \node[draw, below=10mm of n11] (n21) {};
    \node[draw, right=of n21] (n22) {};
    \node[right=of n22] (n23) {$\cdots$};
    \node[draw, right=of n23] (n24) {};
    \node[draw, dashed, inner xsep=7mm, inner ysep=3mm, anchor=north west, fit=(n21) (n24)] (p2) {};
    \node[anchor=north west] at (p2.north west) {$P_2$};

    \node[below=0mm of p2] {\vdots};

    \node[draw, below=16mm of n21] (nl1) {};
    \node[draw, right=of nl1] (nl2) {};
    \node[right=of nl2] (nl3) {$\cdots$};
    \node[draw, right=of nl3] (nl4) {};
    \node[draw, dashed, inner xsep=7mm, inner ysep=3mm, anchor=north west, fit=(nl1) (nl4)] (pl) {};
    \node[anchor=south west, yshift=-0.8mm] at (pl.south west) {$P_n$};

    \node[draw, left=10mm of $(p1.north west)!0.5!(pl.south west)$] (alice) {Alice};
    \node[draw, right=10mm of $(p1.north east)!0.5!(pl.south east)$] (bob) {Bob};

    \draw (alice) -- (n11) -- (n12) -- (n13) -- (n14) -- (bob);
    \draw (alice) -- (n21) -- (n22) -- (n23) -- (n24) -- (bob);
    \draw (alice) -- (nl1) -- (nl2) -- (nl3) -- (nl4) -- (bob);
\end{tikzpicture}
    \caption{Alice is connected to Bob via $
n$ disjoint paths.
      Each path $P_i$ consists of at most $\ell$ nodes,
      and no node may appear in more than one path.
    }%
    \label{fig:graph}
\end{figure}

\paragraph{Our contributions}
Note that if Bob's output differs from Alice's, then at least one share or message must have been tampered with.
In this case, it is desirable that Bob can detect tampering and reject the protocol run.
This property is called integrity; we claim that no existing solutions provide \emph{provable} integrity against such manipulation.
In this work, we present the first protocol which provides provable integrity against manipulation by computationally unbounded in trusted-repeater \gls{qkd} networks.
Our protocol, listed in \cref{alg:protocol} and visualised in \cref{fig:protocol_simple}, is conceptually simple.
First, Alice encodes her secret
using an existing construction from AMD codes (see \cref{ssec:amd}).
She then uses secret sharing (see \cref{ssec:secret-sharing})
and finally she relays each share over its own path,
using the \gls{qkd} generated keys as a \gls{otp}.
As we aim for \gls{itsy}, meaning that security must hold against
computationally unbounded adversaries,
we assume an upper bound on the number of paths that the adversary can corrupt. 
We formally prove security by adapting the existing game-based model~\cite{BR07}
to the context of trusted repeater networks,
and we show that the overhead (in number of additional \gls{qkd} bits consumed) is optimal,
and we provide an implementation to demonstrate our methods are efficient.

\subsection{Related Work}\label{sec:related_work}
Various solutions have been proposed which aim to achieve integrity in the trusted repeater context, most notably the SECOQC protocol~\cite{SPD+10}.
We summarise these solutions and
address the issues we found.

\subsubsection{SECOQC}\label{sec:secoqc}
As part of the SECOQC project,
Salvail, Peev, Diamanti, Alléaume, Lütkenhaus and Länger~\cite{SPD+10}
analysed the security of trusted repeater networks.
They proposed to use a secret-sharing scheme to exchange a secret confidentially,
followed by an interactive protocol to achieve integrity.
In their protocol, Alice locally computes random parity checks
over part of the secret-sharing scheme secret
and subsequently uses the remaining bits of the secret-sharing scheme output as a \gls{mac} key
to authenticate the transmission of the parity check equations.
The authors claim that their protocol enables Bob not only to \emph{detect} cheating
but also to \emph{identify} the cheater.

The issue with this approach is that \glspl{mac} only guarantee security under the assumption that the communicating parties already share an identical key.
This constitutes circular reasoning and thus the protocol does not yield provable security.
While we do not provide an explicit attack against cheater detection,
we do demonstrate in \cref{sec:disproof} that this allows an adversary to
break the cheater identification of the protocol.
The attack performs algebraic manipulation
to the \gls{mac} key, causing Bob to misidentify the malicious parties.

\subsubsection{Other solutions}
Beals, Hynes and Sanders proposed the \emph{stranger authentication protocol}:
an interactive protocol for checking the equality of a secret-sharing scheme output
which does not assume pre-shared keys~\cite{bs08,BHS09}.
In their protocol, Alice and Bob have secret-sharing scheme outputs
$S = S_1 \| S_2 \| S_3$ and
$S' = S_1' \| S_2' \| S_3'$, respectively.
Alice generates a random number $r$ and transmits $(r, H(S_3)) \oplus S_1$,
and Bob replies with $H(r) \oplus S_2'$.
The function $H$ is described as a hash function,
but its requirements are only specified informally
and it is unclear whether they can be satisfied 
against a computationally unbounded adversary.
The authors remark that their protocol assumes $S$ and $S'$ are unknown to the adversary,
whereas we observe that our method ensures integrity even if the adversary can choose $S$.

Another notable solution is presented by Rass and K\"onig~\cite{RK12}.
While the primary focus of their work is different from ours
as they address security against adversaries capable of rerouting
shares across the network,
the aspect relevant to our work is their method for detecting share manipulation.
They suggest that an ``obvious quickfix''
is to attach a checksum to the dealt secret.
Their proposed protocol employs a one-time \gls{mac} tag as a checksum,
however the authors neither specify a \gls{mac} key
nor clarify what an appropriate key would be.
They assume no pre-shared keys, and we observe that deriving the key from the dealt secret would lead to the same problematic circular reasoning observed in \cref{sec:secoqc}.

\paragraph{Outline}
In \cref{sec:bckgd}, we provide the necessary background for this work.
In \cref{sec:prot},
we provide our protocol (\cref{alg:protocol}) which uses an existing construction from cryptographic literature. We argue that the protocol achieves the desired security properties efficiently. We conclude with future directions in \cref{sec:concluding_section}.

\section{Background}\label{sec:bckgd}
In this section, we provide the necessary background on secret sharing, \gls{amd} codes and trusted repeater networks.
We will use the symbol $\bot$ to indicate a missing value,
with addition defined such that $\bot + x = x + \bot = \bot$ for all $x$.

\subsection{Secret Sharing}\label{ssec:secret-sharing}

A secret-sharing scheme allows a \emph{dealer} to split their secret into $n$ shares,
such that the secret can be completely recovered from a \emph{qualified} subset of shares,
but an \emph{unqualified} subset of shares reveals no information about the secret~\cite{shamir79}.
The subsets that are qualified are defined by an \emph{access structure}, which is a property of the specific scheme in use.
We adapt the definition of a secret-sharing scheme from other literature~\cite{cdfpw08}.

\begin{definition}[Secret-Sharing Scheme]
A secret-sharing scheme is a pair of functions $(\Share, \Rec)$.
Let $\mathcal{S}$ denote the secret space and $\mathcal{Y}^n$ the space of $n$ shares.
Given a secret $S \in \mathcal{S}$,
the probabilistic function $\Share$ splits $S$ into a vector
$\vect{S} = (S_1, \dots, S_n) \in \mathcal{Y}^n$,
called the \emph{shares} of $S$.
The function $\Rec$
recovers $S' \in \mathcal{S} \cup \{\bot\}$
from a vector of shares $\vect{S'} = (S_1', \dots, S_n') \in (\mathcal{Y} \cup \{\bot \})^n$.
Shares may be omitted, in which case $S_i' = \bot$ for some index $i$, and recovery may fail or reject, in which case $S' = \bot$.

Let $B \subseteq \{1, \dots, n\}$ be any \emph{qualified} set.
Then correctness requires that 
\[
\Rec(\vect{S'}) = S \quad \text{ for } \quad 
S_i' = \begin{cases}
    S_i & i \in B \\
    \bot & i \not\in B.\\
\end{cases}\]
Perfect confidentiality requires that for any \emph{unqualified} subset
$A \subset \{1, \dots, n\}$,
the shares $\{S_i\}_{i \in A}$ reveal no information about $S$.
\end{definition}

While the confidentiality requirement is presented somewhat informally in the above definition, it has a natural formalisation through the concept of indistinguishability~\cite{BR07}.

One of the simplest secret-sharing schemes is additive secret sharing, which distributes a secret as a sum of random shares.
In an additive group such that $S \in \mathcal{G}$,
$\Share(S)$ generates $(n-1)$ shares as uniformly random group elements
$S_i \getr \mathcal{G}$ for $1 \leq i < n$
and sets the last share to value $S_{n} = S - \sum_{i=1}^{n-1} S_i$.
$\Rec$ simply adds the shares: $S' = \sum S_i'$.
One important property of this scheme is that it is linear,
meaning that 
$\Rec(\vect{S} + \vect{S'})
= \Rec(\vect{S}) + \Rec(\vect{S'})$,
where vector addition is defined element-wise.\footnote{Formally, linearity holds only with respect to non-missing shares, which is sufficient for our purposes.}
In the additive secret sharing scheme, only the complete set of shares constitutes a qualified set.
A notable special case operates over the group of bitstrings under bitwise addition modulo 2, i.e.\ the bitwise exclusive-or ($\oplus$) operation.

Our method can easily be generalised to other linear secret-sharing schemes
with more flexible access structures,
such as Shamir's scheme~\cite{shamir79}.

\subsubsection{Robust secret sharing}
Robust secret sharing protects against an adversary that
aims to alter the recovery output ($S' \not\in \set{S, \bot}$)
by changing one or more of the shares ($S_i' \not\in \set{S_i, \bot}$).
Many definitions of robust secret sharing exist in the literature~\cite{MS81,tw89,ok96,BR07},
expressing subtly different security guarantees.
Some schemes merely want to \emph{detect} any tampering,
while others want to be able to \emph{correct},
i.e.\ recover $S' = S$ even when some $S_i' \neq S_i$.
Some schemes also aim to \emph{identify} whom the cheating shareholders are.
\emph{Strong} robustness protects when sharing any secret $S$,
while \emph{weak} robustness assumes that $S$ was sampled uniformly random and is unknown to the adversary~\cite{cdfpw08}.
Some schemes assume that recovery is done by an honest shareholder
(who knows that at least their own share is correct),
while others consider recovery by an external party.
For concrete security,
the distinction between \emph{dynamic} and \emph{static}
corruption is important~\cite{BR07}.

In general, robust secret sharing adds some redundancy to the shares,
increasing their size beyond that of the shared secret.
A strongly robust scheme will 
require $2\log1/\delta$ bits
to detect tampering with probability at least $\delta$~\cite{cdfpw08}.

The scheme we consider achieves detection,
no identification, external recovery,
and (a variant of) strong robustness,
all in the context of dynamic corruption.
We require a variant of robustness where the adversary
can choose some shares and induce a shift on uncorrupted shares,
which we introduce in \cref{sec:prot}, \cref{defn:shift-robust}.
In \cref{ssec:formal-int}, we adapt 
an existing game-based formalisation of robust secret sharing accordingly~\cite{BR07}.

\subsubsection{Weak robust secret sharing}\label{sec:weak_robust_ss}
Secret sharing can be used directly on secret messages,
or they can be used to establish a shared key between Alice and Bob.
In the latter case, weak robust secret sharing suffices,
which can provide the same security guarantees
with smaller overhead in the size of the shares.
Weak $\delta$-robustness was found to require growing the shares by $\log 1/\delta$ bits~\cite{CDV94}, an improvement of a factor two over strong robustness.
External difference families lead to weakly secure AMD codes with minimal tag size~\cite{cdfpw08}.

We summarise the two strands of literature in which weak robust secret sharing is implemented using \emph{difference sets}.
First, attempts had previously been made to implement Shamir's secret-sharing scheme in a way that protects against malicious shareholders, using elements of a field $F$~\cite{tw89} and a small subset  of legitimate values $S \subset F$.
Ogata and Kurosawa observed that the size of $S$ can be reduced by the use of difference sets~\cite{ok96}.
Ogata and Kurosawa proved that their construction is optimal for the size of shares and that the difference sets defined by Singer~\cite{singer38} suffice for this construction.
They also find useful parameter restrictions, which are that both $q$ and $q^2+q+1$ must be prime.
We note that this scheme is proven secure.
Although the scheme is optimal in terms of share size, it is not certain that there exists an efficient implementation of this scheme.
Specifically, we note that while it might be efficient to sample random points on a line in finite projective plane $PG(2,q)$ (as in Singer's work), finding the corresponding value $i$ in the difference set involves solving a discrete logarithm.
The search for an efficient implementation of the scheme remains open.

Our protocol does not employ weak robust secret sharing,
but we note that it is a possible avenue for improvement when sharing keys instead
of arbitrary messages.
In \Cref{sec:concluding_section}, we will reflect on the use of external difference families for weak robust secret sharing in trusted repeater \gls{qkd} networks.

\subsection{\texorpdfstring{\acrlong{amd}}{Algebraic Manipulation Detection (AMD)}}\label{ssec:amd}
In 2008, Cramer, Dodis, Fehr, Padró and Wichs provided an encoding of data that provides \acrlong{amd}~\cite{cdfpw08}.
Their code maps values to a group
in such a way that any nontrivial shift of an encoded value will be detected
with high probability by the decoder.

\begin{definition}[\acrfull{amd} code~\cite{cdfpw08}]%
\label{defn:amd-code}
    A $\delta$-\gls{amd} code is a pair of functions $(\Enc,\allowbreak\Dec)$:
    a probabilistic encoding map $\Enc: \mathcal{S} \to \mathcal{G}$, from some set $\mathcal{S}$ to a group $\mathcal{G}$,
    and a decoding map
    $\Dec: \mathcal{G} \to \mathcal{S} \cup \{\bot\}$
    such that
    $\Dec(\Enc(s)) = s$
    with probability 1 for all $s\in\mathcal{S}$.
    The security requirement is that
    for any $s \in \mathcal{S}$
    and any $\Delta \in \mathcal{G}$
    it holds that
    \[
    \Pr[\Dec(\Enc(s) + \Delta) \not\in \{s, \bot\}] \leq \delta.
    \]
\end{definition}

The authors also present the following construction and prove its security.
\begin{construction}[\cite{cdfpw08}]\label{const:encode}
Let $\mathbb{F}$ be a field of size $q$ and characteristic $p$,
and let $d$ be an integer such that $d+2$ is not divisible by $p$.
To encode a value $s = (s_1, \dots, s_{n}) \in \mathbb{F}^d$,
define $\Enc: \mathbb{F}^d \to \mathbb{F}^d \times \mathbb{F} \times \mathbb{F}$ given by $\Enc(s) \coloneq (s, x, f(x, s))$,
where $x \getr \mathbb{F}$ is sampled uniformly random,
and $f(x, s) \coloneq x^{d+2} + \sum_{i=1}^{d} s_i x^i$.
Decoding is naturally given by
\[
    \Dec(s', x', \sigma') = \begin{cases}
        s' & \text{if } f(x', s') = \sigma' \\
        \bot & \text{otherwise}.
    \end{cases}
\]
\end{construction}

The authors present several applications
that make use of the 
\gls{amd} code, including
the robust secret-sharing scheme $(\Share^*, \Rec^*)$ defined below.
\begin{construction}[\cite{cdfpw08}]\label{constr:robust-share}
    Let $(\Share, \Rec)$ be a linear secret-sharing scheme
    and let $(\Enc, \Dec)$ be an \gls{amd} code,
    then the robust secret-sharing scheme $(\Share^*, \Rec^*)$
    is defined by $\Share^*(s) = \Share(\Enc(s))$
    and $\Rec^*(\vect{S_i'}) = \Dec(\Rec(\vect{S_i'}))$.
\end{construction}
The authors prove that
\ifIACRCC
\autoref{constr:robust-share}
\else
\cref{constr:robust-share}
\fi
is a strongly robust secret-sharing scheme,
formalising robustness as the detection of tampering by an honest shareholder.

\subsection{Trusted Repeater Networks}\label{sec:trusted-repeater-networks}

A trusted repeater network is a graph $G = (V, E)$ where the vertices $v \in V$ (or nodes)
correspond to the protocol participants,
and edges $(v, v') \in E$ indicate which parties share a key.
Alice and Bob are nodes in the network;
other nodes represent trusted repeaters.
A path $P_i \subseteq E$ between two nodes is a sequence of edges that connect them.
Two paths are disjoint if none of their edges connect to the same vertex,
except for the first and last vertex.
We will only consider paths from Alice to Bob that are pairwise disjoint,
which allows for the following notation (see also \cref{fig:single-path-relay}).
We will provide values with two indices:
the first index $i$ refers to the $i$-th path,
while the second index $j$ refers to the $j$-th edge on that path.
For example, Alice shares a key $\qij{i}{1}{2}$ with her neighbour on path $P_i$.

\begin{figure}[ht]
  \centering
  \begin{tikzpicture}[node distance = 5mm and 26mm]
    \node (a) [draw] {Alice};
    \node (p1) [above right=of a, xshift=-20mm] {};
    \node (pn) [below right=of a, xshift=-20mm] {};
    \node (r1) [draw, right=of a] { };
    \node (r2) [draw, right=of r1] { };
    \node (b) [draw, right=of r2] {Bob};
    \node (p1l) [above left=of b, xshift=20mm] {};
    \node (pnl) [below left=of b, xshift=20mm] {};
    \node[draw, dashed, inner xsep=7mm, inner ysep=3mm, anchor=north west, fit=(a) (r1) (r2) (b)] (p2) {};
    \node[anchor=north west] at (p2.north west) {$P_i$};
    
    \draw[-, dashed] (a) -- (p1);
    \draw[-, dashed] (a) -- (pn);
    \draw[-, dashed] (p1l) -- (b);
    \draw[-, dashed] (pnl) -- (b);
    \draw[->] (a) --
        node [above] {$\cij{i}{1}{2} = m + \qij{i}{1}{2}$}
        node [below, gray] {$j=1$}
        (r1);
    \draw[->] (r1) --
        node [above] {$\cij{i}{2}{3} = m + \qij{i}{2}{3}$}
        node [below, gray] {$j=2$}
        (r2);
    \draw[->] (r2) --
        node [above] {$\cij{i}{3}{4} = m + \qij{i}{3}{4}$}
        node [below, gray] {$j=3$}
        (b);

    \node (ax) [below=of a] {};
    \node (bx) [below=of b] {};
    \draw [decorate,decoration={brace,amplitude=5pt,mirror}]
        (ax) -- (bx) node[midway,yshift=-10pt]{$\ell=3$};    
  \end{tikzpicture}
  \caption{Alice relays message $m$ to Bob on a single path $P_i$ of length $\ell=3$.
    Encryption on the $j$-th edge is done with an \Gls{otp} using key $\qij{i}{j}{j+1}$,
    resulting in ciphertext $\cij{i}{j}{j+1}$.}
  \label{fig:single-path-relay}
\end{figure}
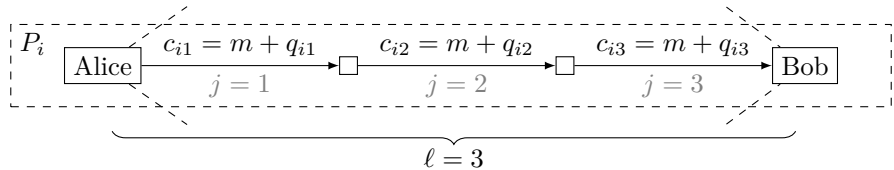

We consider only a simple relaying protocol,
in which each node receives an encrypted message from a neighbouring node, 
then decrypts the message and re-encrypts it 
before forwarding it to the next node along the designated path.
We will consider \gls{otp} encryption
in some additive group.
For example, in \cref{fig:single-path-relay},
the first trusted repeater will receive $\cij{i}{1}{}$ from Alice,
and will relay
$
    \cij{i}{2}{} = \cij{i}{1}{} - \qij{i}{1}{} + \qij{i}{2}{}
$
to the next trusted repeater.

If the adversary corrupts a node, they obtain all secrets held by that node.
A node is considered honest if it has not been corrupted;
Alice and Bob are assumed to be honest parties.
Honest parties securely delete keys after use.
Nodes that are not directly connected (i.e.\ non-neighbouring nodes)
do not share pre-established keys.
All communication between nodes is observable by the adversary,
who has full control over the network,
which includes altering, dropping, inserting, misdelivering, reordering and replaying messages.

Multipath protocols decrease the level of trust required of intermediate nodes.
The idea is as follows:
assume there are $n$ pairwise disjoint paths from Alice to Bob.
Alice acts as the dealer and splits her secret in $n$ shares,
then uses relaying to transmit each share over its own path
so that Bob can recover the secret
when he receives all shares.
The protocol provides confidentiality against any unqualified adversary,
defined as any adversary whose corruption of nodes does not reveal
a qualified set of shares, which is in turn defined by the secret sharing scheme.
For the additive sharing scheme,
this means any adversary that leaves all nodes uncorrupted on at least one path.
We present a multipath protocol that also provides integrity against any unqualified adversary.

\subsubsection{Single QKD Links}
While our results apply to trusted repeater networks in general,
our focus is on \gls{qkd} networks,
where current technological limitations prevent end-to-end quantum channels,
and trusted repeater networks are being deployed to overcome these distance limits.
In this context, the keys shared between neighbours are \gls{qkd}-generated keys.
In order to capture the scenario where an adversary can gain quantum side information (information about the quantum system via, for instance, entangled states), we use the security definition by Renner~\cite[Criterion~(2.6)]{Renner05}.

\begin{definition}[statistical privacy against quantum side information]\label{def:statprivacyqsideinfo}
  A classical-quantum state 
  \[
    \rho_{XE} = \sum_{x} P_X(x) |{x}\rangle \langle {x}| \otimes \rho_E^x,
  \]
  is said to be \(\varepsilon\)-private against quantum side information if
  \begin{equation}\label{eq:trace_dist}
      \frac{1}{2} \left\| \rho_{XE} - \rho_U \otimes \rho_E \right\|_1 \leq \varepsilon,
  \end{equation}
  where
  $\rho_{XE}$ is the joint state of the classical variable $X$ and the adversary's quantum system $E$,
  $\rho_U = \sum_x \frac{1}{|\mathcal{X}|} |{x}\rangle \langle {x}|$ is the uniform distribution over the support $\mathcal{X}$ of $X$,
  $\rho_E = \sum_x P_X(x) \rho_E^x$ is the reduced state of the adversary’s system and
  $\| \cdot \|_1$ denotes the trace norm.
\end{definition}
A distribution that is statistically $\varepsilon$-private is statistically indistinguishable from the uniform distribution by any unbounded adversary.

In \gls{qkd}, the adversary is usually given complete control over the quantum channel. 
Therefore, the adversary can always cause the \gls{qkd} protocol to abort and not produce a key. 
In our scenario, repeated aborts between two neighbouring \gls{qkd} nodes $v, v'$ in the network would result in the loss of the edge $(v,v')$ in the trusted repeater network. 
One could model this power of the adversary by considering networks with probabilistic or adversarial edges. 
However, for simplicity, we assume a \emph{static} network topology. 
In other words, we assume for every edge $(v,v') \in E$, the adversary does not induce a protocol abort. 
Similarly, a \gls{qkd} protocol run could be incorrect, i.e.\ adversary interference or inherent protocol imperfections could cause the generated key to be different for the two parties, therefore rendering this link invalid. 
Again, we assume for simplicity that this does not occur for any edge $(v,v') \in E$. 
Therefore, the imperfection of a \emph{single} \gls{qkd} link is captured by \cref{def:statprivacyqsideinfo} above.

\subsubsection{QKD Networks}
While \cref{def:statprivacyqsideinfo} above captures secrecy of a single \gls{qkd} link, our scenario requires a network of \gls{qkd} links. 
In this scenario, we consider a central adversary
with control over all (quantum) communication in the network.
We therefore generalise \cref{def:statprivacyqsideinfo} in a natural way.

\begin{definition}[collective statistical privacy against quantum side information]\label{def:collectivestatprivacyqsideinfo}
  For $k$ classical variables $X^k = (X_1, X_2, \ldots X_k)$, we call the state
  \[
    \rho_{X^k E} = \sum_{x^k} P_{X^k}(x^k) |{x^k}\rangle \langle {x^k}| \otimes \rho_E^{x^k},
  \]
  \(\varepsilon\)-private against quantum side information if
  \begin{equation}\label{eq:trace_dist_k}
      \frac{1}{2} \left\| \rho_{X^k E} - \rho_U \otimes \rho_E \right\|_1 \leq \varepsilon,
  \end{equation}
  where
  $\rho_{X^k E}$ is the joint state of the classical variables $X^k$ and the adversary's quantum system $E$,
  $\rho_U = \sum_{x^k} \frac{1}{|\mathcal{X}|^k} |{x^k}\rangle \langle {x^k}|$ is the uniform distribution over the support $\mathcal{X}^k$ of $X^k$,
  $\rho_E = \sum_{x^k} P_{X^k}(x^k) \rho_E^{x^k}$ is the reduced state of the adversary's system and
  $\| \cdot \|_1$ denotes the trace norm.
\end{definition}

\begin{lemma} \label{lem:nQKDsystems}
In a \gls{qkd} network with $k$ individual \gls{qkd} links that are $\varepsilon$-private according to \cref{def:statprivacyqsideinfo},
the overall state is $k \varepsilon$-private according to \cref{def:collectivestatprivacyqsideinfo}.
\end{lemma}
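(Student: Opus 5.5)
We prove the claim by a hybrid argument that replaces the $k$ link keys by uniform ones one at a time, paying at most $\varepsilon$ per step; the triangle inequality then gives $k\varepsilon$.

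Write $X^k=(X_1,\dots,X_k)$ for the keys of the $k$ links, and let $E$ denote the adversary's complete quantum system after all links have run. For $0\le t\le k$ define the hybrid state
\[
  \sigma_t \coloneq \rho_U^{\otimes t}\otimes \rho_{X_{t+1}\cdots X_k E},
\]
in which the first $t$ keys are uniform and independent of everything else, and the remaining keys keep their real joint state with $E$. Then $\sigma_0=\rho_{X^kE}$ and $\sigma_k=\rho_U^{\otimes k}\otimes\rho_E$, which is exactly the ideal state in \cref{def:collectivestatprivacyqsideinfo}. By the triangle inequality for the trace norm,
\[
  \tfrac12\|\sigma_0-\sigma_k\|_1\le\sum_{t=1}^{k}\tfrac12\|\sigma_{t-1}-\sigma_t\|_1 .
\]
It therefore suffices to bound each term by $\varepsilon$.

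Consider the step from $\sigma_{t-1}$ to $\sigma_t$. Both states contain the same tensor factor $\rho_U^{\otimes (t-1)}$, and the trace norm is multiplicative, hence invariant, under tensoring with a fixed state. So
\[
  \tfrac12\|\sigma_{t-1}-\sigma_t\|_1=\tfrac12\bigl\|\rho_{X_t E'}-\rho_U\otimes\rho_{E'}\bigr\|_1 ,
\]
where $E'\coloneq X_{t+1}\cdots X_k E$ collects the later keys together with $E$. The single-link privacy of link $t$ (\cref{def:statprivacyqsideinfo}) bounds this by $\varepsilon$, provided it may be applied with side information $E'$ rather than only with the adversary's own system.

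This provision is the main obstacle. We interpret the $\varepsilon$-privacy of each link in the network setting as holding against the \emph{entire} rest of the world seen by a central adversary, which includes the other links' keys. The justification I would give is operational. The other links are separate protocol runs, and the adversary could have simulated or controlled them itself; moreover $\rho_{X_tE'}$ arises from the adversary's joint state by a CPTP map that acts only on the non-$X_t$ part. Trace distance is monotone under such maps. So if the QKD security criterion holds for an arbitrary adversary, and therefore for one who holds $X_{t+1},\dots,X_k$ as side information, the $\varepsilon$ bound applies with $E'$.

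I would state this explicitly as the modelling assumption: each link is composably $\varepsilon$-secure against any environment, including one containing the other links' outputs. This is standard for composable QKD security, and the entire content of the lemma then reduces to the telescoping sum above. Summing the $k$ steps gives total distance at most $k\varepsilon$.
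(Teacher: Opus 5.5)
Your proposal is correct and follows the paper's approach, made precise. The paper's sketch appeals to universal composability and to the independence of the honest parties' randomness across links, then sums the $k$ per-link errors (its ``union bound''); this is exactly your telescoping hybrid over $\sigma_0,\dots,\sigma_k$, with each link's $\varepsilon$-privacy applied against side information $E'$ that contains the other links' keys. Stating that composability assumption explicitly, as you do, is the right reading of the lemma, since per-link privacy against $E$ alone plus independence of the keys would not suffice (e.g.\ $E = X_1 \oplus X_2$ for independent uniform bits $X_1, X_2$).
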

\begin{proof}[Proof sketch]
Intuitively, the claim follows from the universal composability of \gls{qkd} keys. 
Concretely, the randomness of the keys generated in the individual \gls{qkd} runs is generated by honest players, and hence the keys are independent of each other.
Therefore, a global eavesdropper does not have the ability to correlate the generated keys.
The claimed overall privacy bound of $k \varepsilon$ then follows by a union bound from the $\varepsilon$-privacy of the individual keys.
\end{proof}

\section{Using \texorpdfstring{\gls{amd}}{AMD} Codes in Trusted Repeater Networks}\label{sec:prot}

Our protocol is straightforward: Alice shares a secret using
\ifIACRCC
\autoref{constr:robust-share}
\else
\cref{constr:robust-share}
\fi
and relays the shares over the trusted repeater network.
The protocol is
defined in \cref{alg:protocol} and
visualised in \cref{fig:protocol_simple}.
We use notation $\cij{i}{j}{}$ for the ciphertext message
sent on the $j$-th edge of the $i$-th path,
$\cpij{i}{j}{}$ for the delivered ciphertext
(possibly changed by the adversary)
and $\qij{i}{j}{}$ for the corresponding key.

\begin{algorithm}[ht]
  \caption{Protocol for achieving confidentiality and integrity in trusted relay networks.}\label{alg:protocol}%
  \begin{algorithmic}[1]
  \Procedure{Alice}{$s$}
    \State $\vect{S} \gets \Share^*(s)$
    \Comment{$\vect{S} = (S_1, \dots, S_{n}) = \Share(\Enc(s))$}
    \For{$1 \leq i \leq n$}
      \State $c_i' \gets S_i + \qij{i}{1}{2}$
      \Comment{Encrypt. $\vect{c} = (\cij{1}{1}{2}, \dots \cij{n}{1}{2})$}
      \State \textbf{delete} $\qij{i}{1}{}$
    \EndFor
    \State \Return $\vect{c}$
    \Comment{Ciphertext $c_i$ is sent to path $P_i$}
  \EndProcedure
  \Procedure{Relay$_{ij}$}{$\cpij{i}{(j-1)}{}$}
    \Comment{The $j$-th node on path $i$ receives ciphertext $\cpij{i}{(j-1)}{j}$}
    \State $\cij{i}{j}{} \gets \cpij{i}{(j-1)}{} - \qij{i}{(j-1)}{} + \qij{i}{j}{}$
    \Comment{Re-encrypt}
    \State \textbf{delete} $\qij{i}{(j-1)}{}$ \textbf{and} $\qij{i}{j}{}$
    \State \Return $\cij{i}{j}{}$
    \Comment {Ciphertext $\cij{i}{j}{}$ is sent to the $(j+1)$-th node on path $i$}
  \EndProcedure
  \Procedure{Bob}{$\vect{c'}$}
    \Comment{Ciphertext $c_i'$ is received from path $P_i$}
    \For{$1 \leq i \leq n$}
      \State $S_i' \gets c' - \qij{i}{\ell}{}$
      \Comment{Decrypt. $\vect{S'} = (S'_1, \dots, S'_{n})$}
      \State \textbf{delete} $\qij{i}{\ell}{}$
    \EndFor
    \State $s' \gets \Rec^*(\vect{S'})$
    \Comment{Bob outputs $s' = \Dec(\Rec(\vect{S}))$}
  \EndProcedure
  \end{algorithmic}
\end{algorithm}

\begin{figure}[ht]
    \centering
    \begin{tikzpicture}[node distance=10mm and 24mm]
    \node[draw] (alice) {Alice};
    \node[above=0mm of alice, align=left, xshift=-18mm] (ainit) 
    {
        $\vect{S} = \Share^*(s)$\\
        $\forall i: \cij{i}{1}{2} = S_i + \qij{i}{1}{2}$
    };
    \node[draw, dashed, right=of alice] (p1) {$P_2$};
    \node[
        draw,
        right=of p1,
    ] (bob) {Bob};
    \node[draw, dashed, above=of p1] (p0) {$P_1$};
    \node[above right=0mm of p0]
      {$\cij{1}{2}{3} = \cpij{1}{1}{2} - \qij{1}{1}{2} + \qij{1}{2}{3}$};
    \node[draw, dashed, below=of p1] (pl) {$P_3$};
    
    \node[below right=0mm of bob, align=left,xshift=-2mm] (bfinal) 
    {
        $\forall i: S_i' = \cpij{i}{2}{3} - \qij{i}{2}{3}$\\
        $s' = \Rec^*(\vect{S'})$
    };

    \draw[->] (alice) --
        node [above, sloped, very near start] {$\cij{1}{1}{2}$}
        node [above, sloped, very near end] {$\cpij{1}{1}{2}$}
        (p0);
    \draw[->] (p0) --
        node [above, sloped, very near start] {$\cij{1}{2}{3}$}
        node [above, sloped, very near end] {$\cpij{1}{2}{3}$}
        (bob);
    \draw[->] (alice) --
        node [above, sloped, near start,xshift=1mm] {$\cij{2}{1}{2}$}
        node [above, sloped, very near end] {$\cpij{2}{1}{2}$}
        (p1);
    \draw[->] (p1) --
        node [above, sloped, very near start] {$\cij{2}{2}{3}$}
        node [above, sloped, near end, xshift=-1mm] {$\cpij{2}{2}{3}$}
        (bob);
    \draw[->] (alice) --
        node [below, sloped, very near start] {$\cij{3}{1}{2}$}
        node [below, sloped, very near end] {$\cpij{3}{1}{2}$}
        (pl);
    \draw[->] (pl) --
        node [below, sloped, very near start] {$\cij{3}{2}{3}$}
        node [below, sloped, very near end] {$\cpij{3}{2}{3}$}
        (bob);
\end{tikzpicture}
    \caption{
        Protocol for transmitting a secret $s$
        over the encrypted channels of the trusted repeater network,
        visualised here for $n=3$ and $\ell=2$.
        Alice and Bob output $s$ and $s'$, respectively,
        where we note that $s' = \bot$ if decoding fails,
        indicating that Bob rejects.
    }
    \label{fig:protocol_simple}
\end{figure}
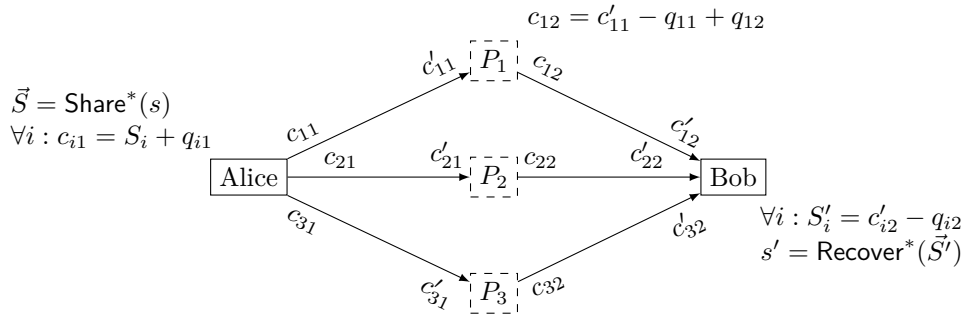

\subsection{Security analysis}\label{ssec:security-analysis}

For this work we make strong assumptions on the trusted repeater network:
the network $G$ is fixed,
the endpoints are fixed,
and even the keys used between neighbours are fixed.
We reflect on these assumptions in \cref{sec:assumptions}.

We first consider the confidentiality of the protocol; this reasoning is formalised in \cref{ssec:formal-conf}.
\begin{theorem}[Confidentiality (informal)]\label{thm:conf}
  Fix a trusted repeater network with $n$ disjoint paths,
  each of length at most $\ell$,
  where all neighbours share a key that is $\varepsilon$-private against quantum side information.
  Let $(\Share, \Rec)$ be a perfectly confidential secret-sharing scheme.
  Then the protocol of \cref{alg:protocol}
  provides $n \ell \varepsilon$-confidentiality
  against all unqualified adversaries.
\end{theorem}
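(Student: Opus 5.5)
The plan is a two-step hybrid argument: first replace all \gls{qkd} keys by perfectly uniform keys at a cost of $n\ell\varepsilon$ in distinguishing advantage, then show that with ideal keys the adversary's view is independent of the secret.

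\emph{Step 1 (idealising the keys).} The network of \cref{alg:protocol} has at most $k = n\ell$ edges, each carrying one key $\qij{i}{j}{}$. By \cref{lem:nQKDsystems}, the joint state of all keys together with the adversary's quantum side information is $n\ell\varepsilon$-close in trace distance to the state where the keys are uniform and independent of $E$. The entire confidentiality experiment (sharing, encryption, relaying, dynamic corruption queries, and the final guess) is a CPTP map applied to this joint state. Trace distance does not increase under such maps, so the adversary's advantage changes by at most $n\ell\varepsilon$ when we switch to ideal keys. Formally, this is the hop between the real game and a game with uniformly random keys.

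\emph{Step 2 (perfect confidentiality with ideal keys).} Fix an unqualified adversary and let $A$ be the set of paths containing at least one corrupted node. Since the adversary is unqualified, $\{S_i\}_{i\in A}$ is an unqualified set. I will show by a simulation argument that the view can be generated from $\{S_i\}_{i\in A}$ alone.

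For a path $i\in A$, corrupting any node on it reveals that node's keys, and hence the share $S_i$ through the ciphertexts. The simulator is given $S_i$ and can therefore produce all keys and ciphertexts on that path consistently.

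For a path $i\notin A$, every ciphertext is one-time-pad encrypted under a fresh uniform key $\qij{i}{j}{}$ that the adversary never learns. This also holds for tampered deliveries $\cpij{i}{j}{}$: the relay adds $-\qij{i}{j-1}{}+\qij{i}{j}{}$, so each outgoing ciphertext remains masked by the independent key $\qij{i}{j}{}$. The simulator therefore outputs independent uniform group elements for these ciphertexts, and this simulation is perfect.

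Bob's output is not given to the adversary in the confidentiality game. Consequently the view depends on $s$ only through $\{S_i\}_{i\in A}$. Perfect confidentiality of $(\Share,\Rec)$ applied to $\Enc(s)$, and hence to $\Share^*$, then gives zero advantage in the ideal game. Combining the two steps yields the bound $n\ell\varepsilon$.

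\emph{Main obstacle.} The delicate point is dynamic corruption. The set $A$ is chosen adaptively, so the simulator must answer a later corruption on a path whose ciphertexts were already simulated as uniform. I would handle this by lazy sampling: when a node on path $i$ is first corrupted, query $S_i$ and set the keys on that path consistently with the already-published ciphertexts, e.g.\ $\qij{i}{j}{} = \cij{i}{j}{} - S_i$ along honest-delivery segments. Because the keys are uniform, this conditioning produces the correct distribution. A second subtlety is making the first step rigorous given adaptive classical interaction with the quantum side information, which I would settle by treating the whole game as a single channel applied to $\rho_{X^kE}$.
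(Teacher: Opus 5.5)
Your proposal is correct and follows the paper's proof essentially step for step. The first hop replaces all $n\ell$ QKD keys by uniform group elements at cost $n\ell\varepsilon$ via \cref{lem:nQKDsystems}. The second is the reduction $\mathcal{B}$ to \oIndSSS{}, which answers relays on uncorrupted paths with uniform ciphertexts and, when a path $i$ is first corrupted, queries $S_i$ and back-computes keys consistent with the adversary's view (e.g.\ $\qij{i}{1}{2} = \cij{i}{1}{2} - S_i$). Your data-processing (CPTP) justification of the first hop and your remark that confidentiality of $\Share$ carries over to $\Share^* = \Share\circ\Enc$ only make explicit what the paper leaves implicit.
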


\paragraph{Proof sketch.}
Each encryption with an $\varepsilon$-private key leaks at most
$\varepsilon$ information about the encrypted share.
There are $n \ell$ total encryptions;
thus, by \cref{lem:nQKDsystems},
the total information leaked is $n \ell \varepsilon$.
Given our assumption that the adversary corrupts only an unqualified set of paths or shares,
the perfect confidentiality of the secret-sharing scheme guarantees no further leakage. \qed
\vspace{1em}

We note that the network width $n$ occurs in the above bound
because the adversary may corrupt nodes dynamically.
In the case of static corruption,
the total leakage can be bounded by summing over only the fixed, uncorrupted path(s).

Next we consider integrity of the protocol.
\ifIACRCC
\autoref{constr:robust-share},
\else
\Cref{constr:robust-share},
\fi
$(\Share^*, \Rec^*)$,
is known to be robust
when a shareholder recovers the secret~\cite[Theorem 3]{cdfpw08}.
In contrast, recovery is external in the trusted repeater network setting:
Bob cannot assume receipt of at least one honest share.
Given that \gls{otp} encryption is malleable,
the adversary can make a blind change to the honest shares.
We therefore generalise robustness to account for this adversarial capability.

\begin{definition}[Shift-robustness]\label{defn:shift-robust}
  A secret-sharing scheme $(\Share, \Rec)$ is \emph{$\delta$-shift-robust}
  if for any secret $s$, and for any computationally unbounded adversary $\mathcal{A}$
  that corrupts an unqualified set of paths
  $A \subset \{1, \dots, n\}$,
  we have the following:
  Let $\vect{S} = \Share(s)$,
  let $a_i = \mathcal{A}(i, \{S_a\}_{a \in A}) \in \mathcal{G} \cup \set{\bot}$ be the $i$-th adversarially chosen value
  and let $\vect{S'}$ be a vector
  such that, for $1 \leq i \leq n$,
  \[
    S'_i = \begin{cases}
        a_i & \text{if } i \in A \\
        S_i + a_i & \text{if } i \not\in A
      \end{cases}
  \]
then $\Pr[\Rec(\vect{S'}) \not\in \{s, \bot\}] \leq \delta$.
\end{definition}

We now prove that
\ifIACRCC
\autoref{constr:robust-share}
\else
\cref{constr:robust-share}
\fi
is $\delta$-shift robust. 
The proof is very similar to the proof for robust secret sharing~\cite{cdfpw08}.

\begin{lemma}\label{lem:shift-sss}
    Let $(\Enc, \Dec)$ be a $\delta$-\gls{amd} code,
    and $(\Share, \Rec)$ be a perfect linear secret-sharing scheme code,
    then the secret-sharing scheme $(\Share^*, \Rec^*)$
    defined in
    \ifIACRCC
    \autoref{constr:robust-share}
    \else
    \cref{constr:robust-share}
    \fi
    is a $\delta$-shift robust secret-sharing scheme.
\end{lemma}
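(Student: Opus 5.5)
The plan is to reduce shift-robustness of $(\Share^*,\Rec^*)$ to the $\delta$-security of the \gls{amd} code, using linearity of $\Rec$ and perfect confidentiality of $(\Share,\Rec)$, following the proof of \cite[Theorem 3]{cdfpw08}. Fix a secret $s$ and an adversary $\mathcal{A}$ corrupting an unqualified set $A$. Let $g=\Enc(s)$ and $\vect{S}=\Share(g)$. The adversary's values $a_i=\mathcal{A}(i,\{S_a\}_{a\in A})$ depend only on $\vect{S}_A=\{S_a\}_{a\in A}$ (and possibly its internal randomness, which we fix or average over). If some $a_i=\bot$ with $i\notin A$, or a missing share makes $\Rec$ output $\bot$, then $\Rec^*$ outputs $\bot$ by the convention on $\bot$, so it suffices to treat the case where all values are group elements.

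\textbf{Step 1: decompose the tampered vector.} Define the offset vector $\vect{D}$ by $D_i = a_i - S_i$ for $i\in A$ and $D_i=a_i$ for $i\notin A$. Then $\vect{S'}=\vect{S}+\vect{D}$, and $\vect{D}$ is a deterministic function of $\vect{S}_A$ (given the adversary's coins). By linearity, $\Rec(\vect{S'})=\Rec(\vect{S})+\Rec(\vect{D})=g+\Delta$, where $\Delta\coloneq\Rec(\vect{D})\in\mathcal{G}$ is again a function of $\vect{S}_A$ only.

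\textbf{Step 2: independence of $\Delta$ from the \gls{amd} randomness.} This is the crux. Since $A$ is unqualified and the scheme is perfectly confidential, the distribution of $\vect{S}_A$ is the same for every shared value; in particular $\vect{S}_A$ is independent of $g$, so $\Delta$ is independent of the encoding randomness in $g=\Enc(s)$. Concretely, one can sample $\vect{S}_A$ first from its fixed marginal, and then sample $g\gets\Enc(s)$ and the remaining shares conditioned on $\vect{S}_A$. Then
\[
\Pr[\Rec^*(\vect{S'})\notin\{s,\bot\}] = \sum_{v}\Pr[\vect{S}_A=v]\,\Pr[\Dec(\Enc(s)+\Delta(v))\notin\{s,\bot\}] \le \delta,
\]
by \cref{defn:amd-code} applied to each fixed shift $\Delta(v)$.

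The main obstacle I expect is making Step 2 rigorous: one must verify that perfect confidentiality gives joint independence of $\vect{S}_A$ and $g$ (not just of $\vect{S}_A$ and $s$), which holds because confidentiality is stated for every secret in $\mathcal{G}$, including the encoded value $g$. A second point to check is that linearity applies to $\vect{D}$ even though $\vect{D}$ need not be a valid sharing. This requires $\Rec$ to be a group homomorphism on full vectors, as for additive sharing. For threshold schemes with missing shares, I would fix the set of non-$\bot$ positions and use linearity of the corresponding reconstruction map.
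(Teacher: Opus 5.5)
Your proposal is correct and follows essentially the same route as the paper. Both write $\vect{S'} = \vect{S} + \vect{\Delta}$, use linearity to get $\Rec(\vect{S'}) = \Enc(s) + \Rec(\vect{\Delta})$, invoke perfect confidentiality to make the shift independent of the encoding randomness, and then apply the AMD bound; your explicit conditioning on $\vect{S}_A$ and averaging over its values is just a more careful rendering of the paper's one-line independence claim.
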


\begin{proof}
Let $S = \Share^*(s)$ and
let $\vect{S'}$ be a vector meeting the requirements of \cref{defn:shift-robust}.
Let $\vect{\Delta} = \vect{S'} - \vect{S}$.
By the linearity of the secret-sharing scheme we have
\begin{align*}
  \Pr[\Rec^*(\vect{S'}) \not\in \set{s, \bot}]
  &= \Pr[\Dec(\Rec(\vect{S}) + \Rec(\vect{\Delta})) \not\in \set{s, \bot}] \\
  &= \Pr[\Dec(\Enc(s) + \Delta)) \not\in \set{s, \bot}]
\end{align*}
where $\Delta = \Rec(\vect{\Delta})$ is determined by the adversarial strategy.
By the perfect confidentiality of the secret-sharing scheme,
$\Enc(s)$ is unknown to the adversary (except for a-priori knowledge of $s$), 
so $\Delta$ is independent of the randomness used in $\Enc(s)$.
The lemma then follows immediately from the definition of \gls{amd} codes (\cref{defn:amd-code}).
\end{proof}

The protocol of \cref{alg:protocol}
provides integrity against any unqualified adversary,
meaning they corrupted only an unqualified set of paths.
We argue integrity informally below and formalise this reasoning later in \cref{ssec:formal-int}.

\begin{theorem}[Integrity (informal)]\label{thm:integrity}
  Fix a trusted repeater network with $n$ disjoint paths,
  each of length at most $\ell$,
  where all neighbours share a key that is $\varepsilon$-private against quantum side information.
  Let $(\Share, \Rec)$ be a perfectly confidential
  and $\delta$-shift robust secret-sharing scheme.
  Then the protocol of \cref{alg:protocol} provides
  $(n \ell \varepsilon + \delta)$-integrity
  against all unqualified adversaries.
\end{theorem}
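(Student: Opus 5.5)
The plan is a two-step hybrid: first replace all QKD keys by perfectly uniform keys at cost $n\ell\varepsilon$, then show that with perfect keys the protocol reduces to a shift-robustness experiment for $(\Share, \Rec)$, which the adversary wins with probability at most $\delta$.

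\textbf{Step 1: ideal keys.} The network contains at most $n\ell$ edges on the $n$ disjoint paths, each carrying one key $\qij{i}{j}{}$. By \cref{lem:nQKDsystems}, the joint state of all keys together with the adversary's quantum system is $n\ell\varepsilon$-close in trace distance to the state in which the keys are independent and uniform and independent of the adversary's system. The event ``Bob outputs $s' \notin \set{s,\bot}$'' is obtained by applying a fixed process to this joint state: the adversary's dynamic strategy together with the honest parties' computations. Trace distance does not increase under such processing. Hence the success probability in the real protocol exceeds the success probability with ideal keys by at most $n\ell\varepsilon$.

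\textbf{Step 2: reduction to shift-robustness.} With ideal keys, every ciphertext on an edge adjacent to an honest node is a one-time-pad encryption and therefore uniformly random, independent of the share it carries.

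\begin{itemize}
\item \emph{Corrupted paths $i \in A$.} The adversary learns $S_i$ by corrupting a node on path $i$. Through its control of that node and of the channel, it can make Bob's decrypted share $S_i'$ equal any value $a_i$ it chooses, or make it $\bot$.
\item \emph{Honest paths $i \notin A$.} Each relay computes $\cpij{i}{(j-1)}{} - \qij{i}{(j-1)}{} + \qij{i}{j}{}$. Any additive change the adversary makes to a ciphertext therefore propagates unchanged through decryption. Writing $e_{ij} = \cpij{i}{j}{} - \cij{i}{j}{}$, Bob receives $S_i' = S_i + \sum_j e_{ij}$, so the net effect is a shift $a_i = \sum_j e_{ij}$ (dropping a message yields $\bot$). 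Replaying or misdelivering a ciphertext also amounts to an additive shift, because the substituted value minus the original value is some group element. The adversary has to commit to that element based on its view.
\end{itemize}

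The main obstacle is showing that this view, including everything learned under dynamic corruption, is independent of the honest shares $\{S_i\}_{i\notin A}$ given $\{S_a\}_{a\in A}$. Here I would argue as follows. Ciphertexts on honest paths are masked by fresh uniform keys the adversary never obtains, since no node on those paths is corrupted. Honest nodes securely delete their keys after use. Consequently, any simulator can generate those ciphertexts as uniform values, and the adversary's entire view can be simulated from $\{S_a\}_{a\in A}$ and the adversary's own randomness. A subtlety is that the set $A$ is chosen adaptively. I would handle this by observing that, conditioned on the final unqualified set $A$, the simulation is still consistent, because uncorrupted ciphertexts never depended on the shares they carry.

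\textbf{Conclusion.} It follows that each $a_i$ is a function of $(i,\{S_a\}_{a\in A})$ and of independent randomness. Fixing that randomness by averaging, this is exactly the experiment of \cref{defn:shift-robust}. The $\delta$-shift-robustness of $(\Share,\Rec)$, instantiated by \cref{lem:shift-sss} for $\Share^*$, then bounds the ideal-key success probability by $\delta$. Adding the Step 1 loss gives the claimed bound of $n\ell\varepsilon + \delta$.
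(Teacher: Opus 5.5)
Your proposal follows the paper's proof. It uses the same two hops: replace the QKD keys by uniform ones at cost $n\ell\varepsilon$ via \cref{lem:nQKDsystems}, then simulate with uniformly random ciphertexts to reduce to shift-robustness, with the honest-path shift given by the telescoping sum $a_i=\sum_{j}(\cpij{i}{j}{}-\cij{i}{j}{})$. Two points differ in detail. First, the paper makes your simulator an explicit reduction $\mathcal{B}$ that calls $\oCorruptS(i)$ when $\mathcal{A}$ first corrupts a node on path $i$ and then sets the keys to match the transcript (e.g.\ $\qij{i}{1}{}=\cij{i}{1}{}-S_i$); this handles adaptive corruption inside the adaptive \oShiftSSS{} game, which is sounder than your conditioning on the final $A$, since that conditioning skews the share distribution. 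Second, if an intermediate relay is skipped but a final ciphertext is still delivered, Bob gets a uniformly random share rather than $\bot$, so the paper's $\mathcal{B}$ samples $a_i$ uniformly in that case, whereas you wrote that dropping a message yields $\bot$.
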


\paragraph{Proof sketch}
The proof proceeds through a sequence of games.
In the first gamehop, we replace all \gls{qkd} keys with uniform group elements;
by \cref{lem:nQKDsystems}, the resulting game is indistinguishable,
except with probability at most $n \ell \varepsilon$.
In the second gamehop,
we reduce to the security of the shift-robust secret-sharing scheme.
By \cref{lem:shift-sss}, the winning probability of the last game is $\delta$,
so we add that term.
We now outline the reduction for the second gamehop,
assuming $n=3$ and $\ell=2$ as in \cref{fig:protocol_simple}
for simplicity.
The full version of this proof sketch in \cref{ssec:formal-int} generalizes this to any $n$ and $\ell$.

Given any adversary $\mathcal{A}$ that breaks the integrity of the relay protocol
with perfect keys,
we construct an adversary $\mathcal{B}$
against the $\delta$-shift robustness of the secret-sharing scheme
by simulating the relay protocol of $\mathcal{A}$.
We only consider one intermediate per path to avoid notational clutter;
the formal proof extends to paths of arbitrary length.

Let $\vect{c} = (\cij{1}{1}{2}, \cij{2}{1}{2}, \cij{3}{1}{2})$ denote the ciphertexts sent by Alice;
$\mathcal{B}$ simulates these ciphertexts by sending a uniformly random vector.
When $\mathcal{A}$ corrupts path $P_i$,
$\mathcal{B}$ corrupts the $i$-th shareholder to get $S_i$
and derives $\qij{i}{1}{2} = \cij{i}{1}{2} - S_i$.
The other key is sampled uniformly random ($\qij{i}{2}{3} \getr \mathcal{G}$).
When $\mathcal{A}$ queries a relay on a corrupted path,
the relay returns the computed key.
On uncorrupted paths, the relay returns a uniformly random ciphertext, 
which is equivalent to relaying with a random key $\qij{i}{2}{3}$.
Finally, $\mathcal{A}$ outputs $\vect{c'} = (\cpij{1}{2}{3}, \cpij{2}{2}{3},\allowbreak \cpij{3}{2}{3})$
and from this $\mathcal{B}$ derives the vector $\vect{a} = (a_1, a_2, a_3)$
as follows:
if a path is corrupted, then $\mathcal{B}$ can simply decrypt to obtain
$a_i = \cpij{i}{2}{} - \qij{i}{2}{}$, as Bob would do in the relay protocol.
For uncorrupted paths, $\mathcal{B}$ sets $a_i$ to the sum of differences between received and sent ciphertexts:
\[
a_i = \cpij{i}{2}{3} - \cij{i}{2}{3} + \cpij{i}{1}{2} - \cij{i}{1}{2}.
\]
Recovery decrypts with $\qij{i}{2}{3}$ to get
$S_i' = \cpij{i}{2}{3} - \qij{i}{2}{3}$.
From
\begin{align*}
\cij{i}{1}{} &= S_i + \qij{i}{1}{} \\
\cij{i}{2}{} &= \cpij{i}{1}{} - \qij{i}{1}{} + \qij{i}{2}{}
\end{align*}
we get the required value
\begin{align*}
S_i + a_i
&= S_i + \cpij{i}{2}{} - \cij{i}{2}{} + \cpij{i}{1}{} - \cij{i}{1}{2} \\
&= S_i + \cpij{i}{2}{3} - (\cpij{i}{1}{2} - \qij{i}{1}{} + \qij{i}{2}{}) + \cpij{i}{1}{} - (S_i + \qij{i}{1}{2}) \\
&= \cpij{i}{2}{3} - \qij{i}{2}{3}
= S_i'.
\end{align*}

Note that if no message was relayed on path $P_i$ and thus $\cij{i}{2}{3}$ and $\cpij{i}{1}{2}$ are not defined when $\mathcal{A}$ outputs $\vect{c'}$,
then $\mathcal{B}$ can sample $a_i$ uniformly from random.
This is equivalent to decryption with a uniformly random key.
\qed
\subsection{Properties of Protocol}
We outline the advantages of our protocol and examine the validity of notable assumptions.
\subsubsection{Advantages}
We highlight three advantages of the proposed protocol.
First, we improve significantly on all existing solutions.
Above all, our protocol is proven to be secure,
unlike any other existing protocol.
Additionally, we note that the SECOQC protocol~\cite{SPD+10} is an
interactive protocol with $3n\ell$ authenticated messages
between network neighbours,
while in our protocol is non-interactive
(Alice only sends and Bob only receives)
and requires $n\ell$ non-authenticated messages between neighbours.

Second, our protocol yields an efficient implementation.
For a field $\mathbb{F}$ of size $q$,
the encoding of
\ifIACRCC
\autoref{const:encode}
\else
\cref{const:encode}
\fi
maps a value $s \in \mathbb{F}^d$
to values in $\mathbb{F}^d \times \mathbb{F} \times \mathbb{F}$,
and additive secret sharing has no overhead.
For strong robustness (i.e. for relaying arbitrary messages), we achieve the optimal overhead of $2\log1/\delta$ bits~\cite{cdfpw08}.
To demonstrate the efficiency and simplicity,
we provide a concrete implementation in \cref{sec:code}.

Third, we note that our protocol works with any linear secret sharing scheme, allowing users to choose their trade-off between efficiency and security.
If $n$-out-of-$n$ secret sharing is desired, then the simple additive scheme can be used.
If it is beneficial to allow recovery from fewer than $n$ shares, for instance if network edges might run out of \gls{qkd} key material, then Shamir's $t$-out-of-$n$ scheme with $t < n$ may be a preferable choice of secret sharing scheme.
It should be noted that reducing the size of qualified sets also reduces the number of paths an adversary would need to corrupt in order to recover the secret.

\subsubsection{Assumptions}\label{sec:assumptions}
The first notable assumption relates to node authentication.
In \gls{ake} parties typically identify their peers by knowledge of some secret key:
in asymmetric \gls{ake} this key is the private component of a public-key pair,
while in symmetric \gls{ake} this is the shared key.
In trusted repeater networks, however, such an end-to-end key is lacking.
In trusted repeater networks, a node's identity 
can at best be determined by its location within the network.
Consequently, authentication in such a network relies heavily on the trustworthiness of neighbouring nodes and their connections.
In models where the adversary has some control
over the selection of the endpoints and/or the relaying paths,
authentication will depend on
whether Bob can (cryptographically) verify if the claimed endpoint and paths are correct.
By the strong assumption we made that the network is static,
the assumption of node authentication is achieved.
This is only reasonable if it can be ensured either by the physical properties of the network or by employing an additional (cryptographic) protocol that provides this guarantee
in a dynamic network that is only partially trustworthy.

The second notable assumption is, as detailed in \cref{ssec:security-analysis}, that elements of this model are fixed.
This inability to represent dynamic scenarios is a significant limitation.
In particular, it assumes that both parties and communication paths remain static,
and that the keys used for relaying are known in advance,
assumptions that restrict practicality severely.
Real-world implementations may require greater flexibility, such as the ability to execute the protocol multiple times between different parties.

Finally, we reflect on the choice to add integrity on top of an existing network, which required the above assumptions about the network.
Our protocol ensures that any modifications to messages travelling over this pre-authenticated network, whether by external parties or untrustworthy nodes, are detected, which we describe as integrity.
Existing literature has referred to this concept as authenticity~\cite{cdfpw08,BHS09} or robustness~\cite{RK12}.
Although the term robustness is more consistent with the existing literature on secret-sharing schemes, we avoid using it due to its potential ambiguity in this work.
We do not propose that adding integrity on top of an authenticated network
is the sole method of securing a trusted repeater network; 
the combination of confidentiality, integrity and authenticity 
within a single protocol may offer a more effective approach 
to reducing the assumptions on the network.

\section{Future Directions}\label{sec:concluding_section}
We outline three future directions which emerge from this work.
First, as we recall from \cref{sec:weak_robust_ss} that an external difference family would give a weakly secure AMD code with minimal tag size, and as weak robustness is sufficient when relaying keys, a natural future direction is to search for an external difference family which satisfies the parameter constraints detailed in \cref{sec:weak_robust_ss}.
Second, as we recall from \cref{sec:disproof} that the SECOQC protocol aimed unsuccessfully at being able to identify cheaters, a natural future direction would be to create a protocol with this property.
Third, as noted in \cref{sec:assumptions}, our assumption that the network is static severely restricts its practicality.
Rather than proposing refinements to our protocol within the current static model, we conclude with a broader open problem: the development of a trusted repeater model capable of representing dynamic networks.
\printbibliography
\appendix
\section{Attack on the SECOQC Protocol}\label{sec:disproof}
We summarise the SECOQC protocol illustrated in \cref{fig:protocol_secowc}.
The protocol is executed after a secret-sharing scheme,
and the core concept is the generation of a random parity check matrix $\Lambda$, 
with the subsequent computation of parity bits $r$ over the secret.
A two-time \gls{mac} key $\kappa$ is taken from the secret-sharing scheme output,
and tags are attached to subsequent messages to prevent adversarial modification.\footnote{
  A two-time \gls{mac} key could be the concatenation of two one-time \gls{mac} keys
  or, in the case of Wegman-Carter \glspl{mac},
  a function index and two one-time pads.
}
The authors claim that this protocol can both detect and \emph{identify} the malicious paths
by inspecting from which path they receive invalid parity bits or tags.

As previously noted, the issue with this approach is that \glspl{mac} only guarantee security
when used with shared \emph{identical} keys $\kappa = \kappa'$, and using the equality of values as keys in a
MAC scheme constitutes circular reasoning and does not yield provable security.
We now present a successful attack
against the claimed identification property of the protocol.
With this attack, the adversary is able to cause honest paths to be
misidentified as malicious,
while their own corrupted path is misidentified as honest.

\begin{figure}
    \centering
    \begin{tikzpicture}[node distance=10mm and 24mm]
    \node[draw] (alice) {Alice};
    \node[align=left, above=0mm of alice, xshift=-15mm] (alabel) {
        $\kappa \| s := S$\\
        $\Lambda \leftarrow \{0,1\}^{m \times (n+m)}$\\
        $r := \Lambda s$\\
        $T := \MAC_\kappa(\Lambda \| r)$
    };
    \node[draw, dashed, right=of alice] (p1) {$P_2$};
    \node[
        draw,
        right=of p1,
    ] (bob) {Bob};
    \node[align=left, below=0mm of bob, xshift=23mm] (blabel) {
        $\kappa' \| s' := S'$\\
        accept iff $\exists i$:\\
        \quad$\Lambda_i s' = r_i$ and\\
        \quad$T_{i} = \MAC_{\kappa'}(\Lambda_i \| r_i)$
    };
    \node[draw, dashed, above=of p1] (p0) {$P_1$};
    \node[below=3mm of p1] (dots) {$\vdots$};
    \node[draw, dashed, below=15mm of p1] (pl) {$P_{n}$};

    \draw[->] (alice) -- node [above, sloped] {$(\Lambda, r, T)$} (p0);
    \draw[->] (p0) -- node [above, sloped] {$(\Lambda_1, r_1, T_{1})$} (bob);
    \draw[->] (alice) -- node [above, sloped, pos=.6] {$(\Lambda, r, T)$} (p1);
    \draw[->] (p1) -- node[above, sloped, pos=.4] {$(\Lambda_2, r_2, T_{2})$} (bob);
    \draw[->] (alice) -- node [below, sloped] {$(\Lambda, r, T)$} (pl);
    \draw[->] (pl) -- node[below, sloped] {$(\Lambda_{n}, r_{n}, T_{n})$} (bob);
\end{tikzpicture}
    \caption{
      SECOQC integrity protocol~\cite{cdfpw08}.
      This depicts step (2), the full protocol runs as follows:
      (1) Alice and Bob run a (non-robust) additive secret-sharing scheme
      to get $S$ and $S'$ respectively.
      (2) They run the depicted protocol
      so Bob can verify whether $S' = S$.
      (3) Bob replies with $(b, \MAC_\kappa(b))$,
      where $b$ is the accept/reject bit.
      (4) Alice and Bob deterministically select some bits from $s$
      to remove in order
      to correct for the information leaked via $(\Lambda, r)$.
    }%
    \label{fig:protocol_secowc}
\end{figure}
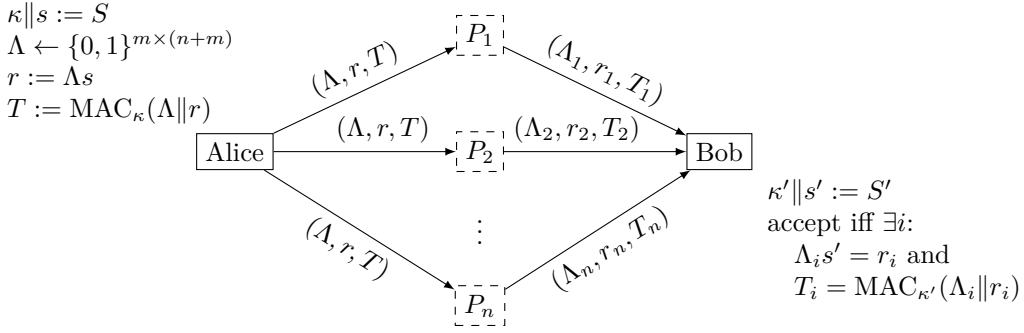

We let the \gls{mac} be implemented with a Wegman-Carter construction~\cite{WC81}.
Let $\mathcal{F}$ be an almost XOR-universal class of functions.
Split the \gls{mac} key in three parts $\kappa_1 \| \kappa_2 \| \kappa_3 := \kappa$,
where $\kappa_1$ identifies a function $f_{\kappa_1} \in \mathcal{F}$
and $\kappa_2$ and $\kappa_3$ will act as one-time pads.
Alice computes her tag as
\begin{equation}\label{eq:wcmac}
  T := \MAC_{\kappa}(\Lambda\|r) = f_{\kappa_1}(\Lambda\|r) \oplus \kappa_2.
\end{equation}

Consider an adversary against the SECOQC protocol
that controls a node in path $P_{n}$.
Let this adversary select some algebraic shift $\Delta_2$
and manipulate step (1) so that Alice gets share
\begin{equation}
  S_{n} = \kappa_{1n} \| \kappa_{2n} \| \kappa_{3n} \| s_{n}
\end{equation}
and Bob gets share 
\begin{equation}
  S_{n}' = \kappa_{1n} \| (\kappa_{2n} \oplus \Delta_2) \| \kappa_{3n} \| s_{n}.
\end{equation}
When combined with the other shares, 
Alice computes \gls{mac} key $\kappa = \kappa_1 \| \kappa_2 \| \kappa_3$ and
Bob computes $\kappa' = \kappa_1 \| (\kappa_2 \oplus \Delta_2) \| \kappa_3$.
In step (2), Alice computes $T$ with \cref{eq:wcmac}
and sends $T$ on every path.
On path $P_n$, the adversary adds $\Delta_2$
to the received tag, so that Bob receives
\begin{equation}
  T_{n} = T \oplus \Delta_2,
\end{equation}
while other paths forward the tag unmodified: $T_{i} = T$ for $i < n$.\footnote{
  If the adversary controls other paths, they can also deliver $T_{n}$ on those paths.
}
When Bob verifies $T_{n}$ with key $\kappa'$,
he accepts the tag because
\begin{equation}
  T_{n} = T \oplus \Delta_2 = f_{\kappa_1}(\Lambda \| r) \oplus \kappa_2 \oplus \Delta_2 = \MAC_{\kappa'}(\Lambda \| r).
\end{equation}
Since $s = s'$ and the adversary did not change $\Lambda$ and $r$,
Bob also finds that the parity information is correct, so he accepts.
At the same time, Bob finds that $T_{i} \neq T_{n}$ for all other paths,
so he wrongly concludes that those paths must be dishonest.

\section{Formalisation of Security Definitions and Proofs}\label{sec:formal}

We adapt Bellare and Rogaway's formalisation of a secret-sharing scheme~\cite{BR07}, in which confidentiality and robustness are modelled through distinct games.
We do not modify their definition of confidentiality apart from minor notational adjustments.
For robustness, we modify the existing model by 
incorporating explicit cheater detection and
allowing the adversary to shift uncorrupted shares arbitrarily.
We assume that the number of shares and the access structure are fixed parameters of the secret-sharing scheme.
We will consider confidentiality and integrity through two separate games, $\oIndRelay{}$ and $\oForgeRelay{}$.

We formally define the security 
using established game-based notions,
where the probability that the adversary breaks security
is defined as the probability of them winning the associated game.
A hybrid proof is a sequence of hops between games such that there are only minor changes between successive games,
so that we can bound the difference between the winning probabilities of subsequent games.
The initial game describes the protocol and the final game models a game with a known bound on the winning probability.

One aspect not covered by the separate modelling of confidentiality and integrity
is the potential interplay between the two properties.
For instance, if the adversary can see the accept/reject bit,
this could inadvertently reveal information about the secret.
We leave the development of an appropriate extension of the model which addresses this limitation for future work.

\subsection{Confidentiality}\label{ssec:formal-conf}

The formal indistinguishability game $\oIndSSS{}$
for a secret-sharing scheme $\Pi_s$ is given in \cref{alg:ind-sss}.
An adversary $\mathcal{A}$ provides two secrets (subroutine $\mathcal{A}_{0}$),
the game selects one at random and executes \Share{}.
The adversary can corrupt any unqualified set of shares,
tracked by the game in a set $T$ (\cref{line:T}),
and then guesses which share the game picked (subroutine $\mathcal{A}_1$).
The game outputs one (denoted with the indicator function $\ind{\cdot}$)
if the adversary is unqualified and guessed correct.
A secret-sharing scheme $\Pi_s$ has \emph{perfect} confidentiality if
for all $\mathcal{A}$ we have
\begin{equation}\label{eq:ind-sss}
\Pr[\oIndSSS(\Pi_s, \mathcal{A}) = 1] = 1/2,
\end{equation}
in other words, the adversary cannot do better than guessing randomly.

\begin{algorithm}
  \caption{Indistinguishability game for secret-sharing scheme $\Pi_s = (\Share, \Rec)$ \cite{BR07}.
  }
  \label{alg:ind-sss}
  \begin{algorithmic}[1]
  \oGame{\oIndSSS}{$\Pi_s, \mathcal{A}$}
    \State $b \getr \set{0,1}$
      \Comment{Challenge bit $b$}
    \State $s_0, s_1 \gets \mathcal{A}_0()$
      \Comment{Adversary chooses secrets $s_0, s_1$}
    \State $\vect{S} \getr \Share(s_b)$
      \Comment $\vect{S} = (S_1, \dots, S_{n})$
    \State $b' \gets \mathcal{A}_1^{\oCorruptS(\cdot)}()$
      \Comment{Adversary guesses $b'$}
    \State \Return $\ind{ b = b' \textbf{ and } T \text{ is unqualified} }$\label{line:T}
  \oEndGame
  \oQuery{\oCorruptS}{i}
    \State $T \gets T \cup \set{i}$
      \Comment{Initially $T = \emptyset$}
    \State \Return $S_i$
  \oEndQuery
  \end{algorithmic}
\end{algorithm}

\begin{algorithm}
  \caption{Indistinguishability game for multipath relay protocol $\Pi_r$
  using the secret-sharing scheme $(\Share, \Rec)$
  over a trusted repeater network.}
  \label{alg:ind-relay}
  \begin{algorithmic}[1]
  \oGame{\oIndRelay}{$\Pi_r, \mathcal{A}$}
    \State $b \getr \set{0,1}$
    \State $s_0, s_1 \gets \mathcal{A}_0()$
    \State $\vect{S} \getr \Share(s_b)$
      \Comment $\vect{S} = \big(S_1, \dots, S_{n} \big)$
    \For{$1 \leq i \leq n \textbf{ and } 1 \leq j \leq \ell$}
      \State $\qij{i}{j}{j+1} \getr \mathrm{QKD}$\label{line:sampleq}
        \Comment{\gls{qkd} generated keys as per \cref{def:statprivacyqsideinfo}}
    \EndFor
    \For{$1 \leq i \leq n$}
      \State $\cij{i}{1}{2} = S_i + \qij{i}{1}{2}$
      \Comment $\vect{c} = \big(\cij{1}{1}{2}, \dots, \cij{n}{1}{2} \big)$
    \EndFor
    \State $b' \gets \mathcal{A}_1^{\oRelay(\cdot, \cdot, \cdot), \oCorruptR(\cdot, \cdot)}(\vect{c})$\Comment{$\vect{c'} = \big( \cpij{1}{\ell}{},\dots,\cpij{n}{\ell}{} \big)$}
    \State \Return $\ind{ b = b' \textbf{ and } T \text{ is unqualified} }$
  \oEndGame
  \oQuery{\oRelay}{$i, j, \cpij{i}{(j-1)}{j}$}
    \If{$(i, j) \in R$}\label{line:R}
      \State \Return $\bot$
        \Comment{Relay is allowed at most once to model key deletion}
    \EndIf
    \State $R \gets R \cup \set{(i,j)}$
      \Comment{Initially $R = \emptyset$}
    \State \Return $\cpij{i}{(j-1)}{j} - \qij{i}{(j-1)}{j} + \qij{i}{j}{j+1}$
      \Comment{Re-encrypt}\label{line:reencrypt}
  \oEndQuery
  \oQuery{\oCorruptR}{$i, j$}
    \If{$(i, j) \in R$}
      \State \Return $\bot$
        \Comment{Keys were already deleted after relaying}
    \EndIf
    \State $T \gets T \cup \set{i}$
    \State \Return $(\qij{i}{(j-1)}{j}, \qij{i}{j}{j+1})$
  \oEndQuery
  \end{algorithmic}
\end{algorithm}

In \cref{alg:ind-relay}, we define the indistinguishability game for trusted repeater networks.
The main differences with $\oIndSSS{}$ are that
corruption reveals the relay keys and that the adversary 
gets full control over the relaying by deciding which (possibly forged)
ciphertext to deliver to which node and in which order.
For simplicity, the game is defined for a protocol $\Pi_r$ that uses secret sharing
and simple relaying,
but it can be generalised by extracting the protocol-specific code into subroutines.
Note that $R$ (\cref{line:R}) is a set containing information
about which nodes have already relayed;
$(i,j) \in R$ means that key $\qij{i}{j}{j+1}$ was already deleted
because $\oRelay(i,j,\cdot)$ was called before.

\begin{theorem}[Confidentiality]\label{thm:relay}
Let $G$ be a
trusted-repeater network with $n$ paths, each of length at most $\ell$,
let the \gls{qkd} keys be $\varepsilon$-private against quantum side information according to \cref{def:statprivacyqsideinfo},
and let $\Pi_r$ be the protocol of \cref{alg:protocol}.
Then, for all (computationally unbounded) adversaries $\mathcal{A}$:
\[
    \left| \Pr[\oIndRelay(\Pi_r, \mathcal{A})=1] - 1/2 \right| \leq n \ell \varepsilon.
\]
\end{theorem}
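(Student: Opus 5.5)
We will prove the statement with a two-hop hybrid argument. Let $G_0 = \oIndRelay(\Pi_r,\mathcal{A})$. In $G_1$, the only change is at \cref{line:sampleq}: instead of \gls{qkd} keys, all $n\ell$ keys $\qij{i}{j}{}$ are sampled independently and uniformly from $\mathcal{G}$, and the adversary's quantum side information is taken in the product state $\rho_U\otimes\rho_E$.

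\textbf{First hop.} By \cref{lem:nQKDsystems}, the joint state of the $n\ell$ keys together with $E$ is $n\ell\varepsilon$-close in trace distance to $\rho_U\otimes\rho_E$. Everything the game does after sampling the keys is a fixed (quantum-classical) operation on this joint state. This covers sampling $b$ and $\vect{S}$, computing $\vect{c}$, answering $\oRelay$ and $\oCorruptR$, and running $\mathcal{A}$ with access to $E$. The output bit is a measurement of the resulting state. Trace distance does not increase under such operations, so
\[
\left|\Pr[G_0=1]-\Pr[G_1=1]\right|\le n\ell\varepsilon .
\]
One subtlety needs checking: $\mathcal{A}_0$ chooses $s_0,s_1$ possibly after seeing $E$. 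However, $b$ and $\vect{S}$ are sampled independently of the keys, so the whole game is still a channel applied to $\rho_{X^kE}$.

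\textbf{Second hop: showing $\Pr[G_1=1]\le 1/2$.} We build an adversary $\mathcal{B}$ against $\oIndSSS$ with $\Pr[\oIndSSS(\Pi_s,\mathcal{B})=1]=\Pr[G_1=1]$, and then apply \eqref{eq:ind-sss}. The simulation follows the proof sketch of \cref{thm:integrity}.
\begin{itemize}
\item $\mathcal{B}$ forwards $s_0,s_1$ from $\mathcal{A}_0$.
\item It gives $\mathcal{A}_1$ a uniformly random vector $\vect{c}$.
\item It lazily maintains the keys of each path $i$. While path $i$ is uncorrupted, a $\oRelay(i,j,\cdot)$ query is answered with a fresh uniform element, recorded as the outgoing ciphertext $\cij{i}{j}{}$.
\item On the first $\oCorruptR(i,j)$ query on path $i$, $\mathcal{B}$ calls $\oCorruptS(i)$ to obtain $S_i$ and sets $\qij{i}{1}{}=\cij{i}{1}{}-S_i$.
\item It then back-solves successive keys from the recorded relays: $\qij{i}{j}{}=\cij{i}{j}{}-\cpij{i}{(j-1)}{}+\qij{i}{(j-1)}{}$.
\item Keys not yet determined are sampled uniformly. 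From then on, relays on that path are computed honestly.
\item Finally $\mathcal{B}$ outputs $\mathcal{A}_1$'s guess.
\end{itemize}
By construction, $\mathcal{B}$ corrupts exactly the set $T$ that $\mathcal{A}$ corrupts, so the unqualified condition transfers.

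\textbf{Main obstacle: perfect simulation.} We must show the simulation is exact, i.e.\ that the adversary's view has identical distribution in $G_1$ and in the simulation. The argument uses the fact that in $G_1$ each key on a path is uniform and used as a one-time pad at most once. The relay bookkeeping via $R$ and key deletion guarantees this: a relay with $(i,j)\in R$ returns $\bot$, and corruption after relaying returns nothing. Consequently, on an uncorrupted path the ciphertexts $\cij{i}{1}{},\dots$ are independent uniform elements regardless of $S_i$. This is because each $\cij{i}{j}{}$ contains the fresh key $\qij{i}{j}{}$, which appears nowhere else in the view.

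Once a path is corrupted, conditioning on the already-revealed ciphertexts and $S_i$ fixes exactly the back-solved keys, and leaves the remaining keys uniform. This matches the lazy sampling. I would formalise this as a short induction on the sequence of oracle queries, with an invariant stating that the joint distribution of (view, keys consistent with view) agrees in both worlds.

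Combining the two hops gives $\Pr[G_1=1]=\Pr[\oIndSSS=1]=1/2$, and therefore $|\Pr[G_0=1]-1/2|\le n\ell\varepsilon$.
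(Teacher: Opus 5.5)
Your proposal is correct and follows the paper's proof essentially step for step. It uses the same first hop to uniformly random keys via \cref{lem:nQKDsystems}, and the same reduction to \oIndSSS{}: uniform ciphertexts before corruption, a call to \oCorruptS{} with back-solving of the relayed keys upon corruption, fresh sampling of the remaining keys, and forwarding $\mathcal{A}_1$'s guess. Your data-processing and perfect-simulation arguments spell out what the paper only asserts. One small correction to the simulation argument: $\qij{i}{j}{}$ does reappear in $\cij{i}{(j+1)}{}$ once both relays are queried, so the claim that it ``appears nowhere else in the view'' is not literally true. The clean fix is to argue via the differences $u_{ij} = \qij{i}{j}{} - \qij{i}{(j-1)}{}$ (with $u_{i1} = \qij{i}{1}{}$), which are i.i.d.\ uniform and each occur in at most one ciphertext of the view; this makes the one-time-pad argument immediate even for adaptive, out-of-order relay queries.
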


\begin{proof}
Recall the definition of $\varepsilon$-privacy against quantum side information (\cref{def:statprivacyqsideinfo}).
We give the proof as two game hops.
We define game $G_1$ as equal to \oIndRelay{} (\cref{alg:ind-relay}),
except that we replace \cref{line:sampleq} with the following so that $G_1$ samples all keys uniformly random:
$\qij{i}{j}{j+1} \getr \mathcal{G}$. 
Since individual QKD keys are $\varepsilon$-private according to \cref{def:statprivacyqsideinfo},
and there are $n\ell$ keys in total,
it follows from \cref{lem:nQKDsystems} that
\begin{equation}\label{eq:ind-sss-g1}
    \left| \Pr[\oIndRelay(\Pi_r, \mathcal{A}) = 1] - \Pr[G_1 = 1] \right| \leq n \ell \varepsilon.
\end{equation}

We now show that $G_1$ is secure by giving a reduction $\mathcal{B}$
to the $\oIndSSS{}$ game,
detailed in \cref{alg:ind-reduction}.
While the reduction involves detailed steps, its structure is conceptually straightforward.
Before corruption, relaying returns random ciphertexts
to simulate re-encryption with a random key (\cref{line:relay-before}).
Upon corruption of path $P_i$,
$\mathcal{B}$ corrupts in the secret sharing game to get the share (\cref{line:corrupt})
and computes all the keys in the path.
For any ciphertext seen by the adversary $\mathcal{A}$ of game $G_1$,
the key is chosen so as to produce a value consistent with the adversary's view (\cref{line:consistent_value,line:computes_keys}),
for example $\qij{i}{1}{2} = \cij{i}{1}{2} - S_i$.
Keys for unseen ciphertexts are sampled uniformly random (\cref{line:samples_keys_random}).
After corruption, all relays are computed with the derived keys (\cref{line:relay-after}).
Finally, $\mathcal{B}_1$ outputs whatever $\mathcal{A}_1$ outputs (\cref{line:A_output}).
Since $\mathcal{B}$ wins \oIndSSS{} exactly when $\mathcal{A}$ wins $G_1$,
we have
\begin{equation}\label{eq:ind-sss-g2}
    \Pr[G_1 = 1] = \Pr[\oIndSSS(\Pi_s, \mathcal{B}) = 1].
\end{equation}

\begin{algorithm}
  \caption{\oIndSSS{} adversary $\mathcal{B} = (\mathcal{B}_0, \mathcal{B}_1)$,
  using $G_1$ adversary $\mathcal{A} = (\mathcal{A}_0, \mathcal{A}_1)$.}
  \label{alg:ind-reduction}
  \begin{algorithmic}[1]
  \Procedure{$\mathcal{B}_0$}{\,}
    \State \Return $\mathcal{A}_0()$
  \EndProcedure
  \Procedure{$\mathcal{B}_1$}{\,}
    \State $\vect{c} \getr \mathcal{G}^n$
    \Comment{Initialise with random ciphertexts}
    \State \Return $\mathcal{A}_1^{\oRelay(\cdot, \cdot, \cdot), \oCorruptR(\cdot, \cdot)}(\vect{c})$\label{line:A_output}
  \EndProcedure
  \oQuery{\oRelay}{$i, j, \cpij{i}{(j-1)}{j}$}\label{line:red-oRelay}
    \If{$(i, j) \in R$}
      \State \Return $\bot$
    \EndIf
    \State $R \gets R \cup \set{(i, j)}$
    \If{$i \in T$}
      \State $\cij{i}{j}{j+1} \getr \mathcal{G}$\label{line:relay-before}
      \Comment{Return random ciphertext on uncorrupted paths}
    \Else
      \State $\cij{i}{j}{j+1} \gets \cpij{i}{(j-1)}{j} - \qij{i}{(j-1)}{j} + \qij{i}{j}{j+1}$\label{line:relay-after}
      \Comment{Re-encrypt with keys computed at corruption}
    \EndIf
    \State \Return $\cij{i}{j}{j+1}$
  \oEndQuery
  \oQuery{\oCorruptR}{$i, j$}\label{line:red-oCorrupt}
    \If{$(i, j) \in R$}
      \State \Return $\bot$
    \EndIf
    \If {$i \not\in T$}
      \State $T \gets T \cup \set{i}$
      \State $S_i \gets \oCorruptS(i)$\label{line:corrupt}
      \State $\qij{i}{1}{2} \gets \cij{i}{1}{2} - S_i$
      \Comment{Set to value consistent with adversary view (first step)}\label{line:consistent_value}
      \For{$2 \leq j' \leq \ell$}
        \If{$(i,j') \in R$}
          \Comment{Note that $\cij{i}{j'}{j'+1} \neq \bot$ and $\cpij{i}{(j'-1)}{j'} \neq \bot$}
          \State $\qij{i}{j'}{j'+1} \gets \cij{i}{j'}{j'+1} - \cpij{i}{(j'-1)}{j'} + \qij{i}{(j'-1)}{j'}$
          \Comment{Consistency with adversary view}\label{line:computes_keys}
        \Else
          \State $\qij{i}{j'}{j'+1} \getr \mathcal{G}$
          \Comment{Other keys sampled from random}\label{line:samples_keys_random}
        \EndIf
      \EndFor
    \EndIf
    \State \Return $(\qij{i}{(j-1)}{j}, \qij{i}{j}{j+1})$
  \oEndQuery
  \end{algorithmic}
\end{algorithm}

Since the secret-sharing scheme is perfectly secure (\cref{eq:ind-sss}),
we can combine \cref{eq:ind-sss,eq:ind-sss-g1,eq:ind-sss-g2}
to conclude
\[%
    |\Pr[\oIndRelay(\Pi_r, \mathcal{A}) =1] - 1/2 | \leq n \ell \varepsilon. \qedhere%
\]%
\end{proof}

\subsection{Integrity}\label{ssec:formal-int}

We define shift-robustness for a secret-sharing scheme where
shares $S_i$ are elements of some additive group $\mathcal{G}$.
Formally, this is modelled by the game \oShiftSSS{} defined in \cref{fig:shiftsss}.
An unqualified adversary can select the initial secret,
corrupt shares to learn and change them,
and shift uncorrupted shares by adding any chosen group element,
all with the goal of having the game accept a secret different than the one initially provided.
A secret-sharing scheme $\Pi_s = (\Share, \Rec)$ is said to be $\delta$-shift-robust
if for all adversaries $\mathcal{A}$ we have
\begin{equation}\label{eq:shift-robust}
  \Pr[\oShiftSSS(\Pi_s, \mathcal{A}) = 1] \leq \delta.
\end{equation}

\begin{algorithm}
  \caption{Shift-Robustness game for secret-sharing scheme $\Pi_s = (\Share, \Rec)$, adapted from \cite{BR07}.
  }
  \label{fig:shiftsss}
  \begin{algorithmic}[1]
  \oGame{\oShiftSSS}{$\Pi_s, \mathcal{A}$}
    \State $s \gets \mathcal{A}_0()$
      \Comment{Adversary chooses secret $s$}
    \State $\vect{S} \getr \Share(s)$
    \State $\vect{a} \gets \mathcal{A}_1^{\oCorruptS(\cdot)}()$
      \Comment{Adversary chooses shares and shifts: $\vect{a} = (a_1, \dots, a_{n})$}
    \For{$1 \leq i \leq n$}
      \If{$i \in T$}
        \State $S_i' \gets a_i$
          \Comment{Corrupted share replaced by $a_i$}
      \Else
        \State $S_i' \gets S_i + a_i$
          \Comment{Add $a_i$ to uncorrupted share}
      \EndIf
    \EndFor
    \State \Return $\ind{ \Rec(\vect{S'}) \not\in \set{s, \bot} \textbf{ and } T \text{ is unqualified} }$
      \Comment{$\vect{S'} = \big(S_1', \dots, S_{n}' \big)$}
  \oEndGame
  \oQuery{\oCorruptS}{i}
    \State $T = T \cup \set{i}$
    \State \Return $S_i$
  \oEndQuery
  \end{algorithmic}
\end{algorithm}

By \cref{lem:shift-sss}, we know that
\ifIACRCC
\autoref{constr:robust-share},
\else
\cref{constr:robust-share},
\fi
namely $(\Share^*, \Rec^*)$ instantiated with a $\delta$-robust \gls{amd}-encoding,
is $\delta$-shift-robust.

\begin{algorithm}
  \caption{Integrity game for relay protocol $\Pi_r$,
  using the secret-sharing scheme $(\Share, \Rec)$
  over a trusted repeater network.}
  \label{fig:int-relay}
  \begin{algorithmic}[1]
  \oGame{\oForgeRelay}{$\Pi_r, \mathcal{A}$}
    \State $s \gets \mathcal{A}_0()$
    \State $\vect{S} \gets \Share(s)$
    \For{$1 \leq i \leq n \textbf{ and } 1 \leq j \leq \ell$}
      \State $\qij{i}{j}{j+1} \getr \mathrm{QKD}$\label{line:replace_rand}
        \Comment{\gls{qkd} generated keys as per \cref{def:statprivacyqsideinfo}}
    \EndFor
    \For{$0 \leq i < n$}
      \State $\cij{i}{1}{2} = S_i + \qij{i}{1}{2}$
      \Comment{$\vect{c} = \big( \cij{1}{1}{},\dots,\cij{n}{1}{} \big)$}
    \EndFor
    \State $\vect{c'} \gets \mathcal{A}_1^{\oRelay(\cdot, \cdot, \cdot), \oCorruptR(\cdot, \cdot)}(\vect{c})$
      \Comment{$\vect{c'} = \bigl( \cpij{1}{\ell}{},\dots,\cpij{n}{\ell}{} \bigr)$}
    \For{$1 \leq i \leq n$}
      \State $S_i' \gets \cpij{i}{\ell}{} - \qij{i}{\ell}{}$
    \EndFor
    \State \Return $\ind{ \Rec(\vect{S}') \not\in \set{s, \bot} \textbf{ and } T \text{ is unqualified} }$
      \Comment{$\vect{S}' = \big(S_1', \dots, S_{n}' \big)$}
  \oEndGame
  \oQuery{\oRelay}{$i, j, \cpij{i}{(j-1)}{j}$}
    \If{$(i, j) \in R$}
      \State \Return $\bot$
        \Comment{Relay at most once, to model deletion of keys}
    \EndIf
    \State $R \gets R \cup \set{(i,j)}$
      \Comment{Initially $R = \emptyset$}
    \State \Return $\cij{i}{j}{(j+1)} \gets \cpij{i}{(j-1)}{j} - \qij{i}{(j-1)}{j} + \qij{i}{j}{(j+1)}$ 
  \oEndQuery
  \oQuery{\oCorruptR}{$i, j$}
    \If{$(i, j) \in R$}
      \State \Return $\bot$
        \Comment{Keys were already deleted after relaying}
    \EndIf
    \State $T \gets T \cup \set{i}$
    \State \Return $(\qij{i}{(j-1)}{j}, \qij{i}{j}{(j+1)})$
  \oEndQuery
  \end{algorithmic}
\end{algorithm}

In \cref{fig:int-relay} we define the game for modelling
the integrity of multipath relaying protocols in the trusted repeater network context.
Similar to the $\oIndRelay{}$ game, we give the adversary the power to corrupt nodes
and to control relaying. In addition, we let the adversary choose $\vect{c'}$,
which is then decrypted and recovered by the game.

\begin{theorem}[Integrity]\label{thm:gamehop}
Let $G$ be a
trusted-repeater network with $n$ paths, each of length at most $\ell$,
let the \gls{qkd} keys be $\varepsilon$-private against quantum side information according to \cref{def:statprivacyqsideinfo},
and let $\Pi_r$ be the protocol of \cref{alg:protocol}.
Then, for all (computationally unbounded) adversaries $\mathcal{A}$:
\[
    \Pr[\oForgeRelay(\Pi_r, \mathcal{A}) = 1] \leq n \ell \varepsilon + \delta.
\]
\end{theorem}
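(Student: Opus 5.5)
The proof proceeds by two game hops, mirroring the proof of \cref{thm:relay}. First, I define $G_1$ as \oForgeRelay{} (\cref{fig:int-relay}) except that \cref{line:replace_rand} samples every key uniformly, $\qij{i}{j}{j+1} \getr \mathcal{G}$. There are $n\ell$ keys, each $\varepsilon$-private, so \cref{lem:nQKDsystems} makes the joint key state $n\ell\varepsilon$-close in trace distance to uniform keys independent of the adversary's side information. The game's output bit is a (post-)processing of keys and adversarial system, so trace distance cannot increase. This gives
\[
\left|\Pr[\oForgeRelay(\Pi_r,\mathcal{A})=1]-\Pr[G_1=1]\right|\leq n\ell\varepsilon .
\]

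Second, I reduce $G_1$ to \oShiftSSS{} for $(\Share^*,\Rec^*)$. The adversary $\mathcal{B}$ runs $\mathcal{A}_0$ to get $s$, hands $\mathcal{A}_1$ a uniformly random $\vect{c}$, and answers the oracles exactly as in \cref{alg:ind-reduction}:
\begin{itemize}
\item on uncorrupted paths, \oRelay{} returns fresh random ciphertexts;
\item on the first \oCorruptR{} of path $i$, $\mathcal{B}$ calls \oCorruptS$(i)$, sets $\qij{i}{1}{2}=\cij{i}{1}{2}-S_i$, reconstructs the keys of already-relayed hops for consistency, and samples the remaining keys at random;
\item later relays on that path re-encrypt honestly.
\end{itemize}
When $\mathcal{A}_1$ outputs $\vect{c'}$, $\mathcal{B}$ builds $\vect{a}$ as follows. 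For a corrupted path it sets $a_i=\cpij{i}{\ell}{}-\qij{i}{\ell}{}$. For an uncorrupted path it sets $a_i=\sum_{j}(\cpij{i}{j}{}-\cij{i}{j}{})$, the telescoping sum of differences between delivered and sent ciphertexts. If some hop on that path was never relayed, the chain is broken; then $S'_i$ is decryption under a key $\mathcal{A}$ never saw, so $\mathcal{B}$ sets $a_i$ uniformly at random.

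The telescoping computation from the sketch after \cref{thm:integrity}, extended over $\ell$ hops, shows $S_i+a_i=\cpij{i}{\ell}{}-\qij{i}{\ell}{}=S'_i$. The corruption set $T$ is the same in both games, so $\mathcal{B}$ wins exactly when $\mathcal{A}$ wins $G_1$. Then \cref{lem:shift-sss} and \cref{eq:shift-robust} give $\Pr[G_1=1]\le\delta$, and combining the two hops yields the bound.

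The main obstacle is checking that $\mathcal{B}$'s simulation is perfect, i.e.\ that $\mathcal{A}$'s view in $G_1$ has the same distribution as in the simulation. In $G_1$, ciphertexts on uncorrupted paths are one-time-pad encryptions under independent uniform keys. Each key $\qij{i}{j}{j+1}$ masks only the single ciphertext $\cij{i}{j}{j+1}$, since relaying happens at most once and corruption after relaying returns $\bot$. Hence every ciphertext on an uncorrupted path is uniform and independent of $S_i$ and of all other values.

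Retroactively fixing keys at corruption time, via \cref{line:consistent_value,line:computes_keys}, is the unique consistent choice. Conditioned on the view, it therefore has the correct distribution, while the keys never used stay uniform.

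A second subtlety is the case of unrelayed hops. There $\mathcal{B}$ must assign the random $a_i$ before seeing anything further, and the real $S'_i$ is uniform and independent of $\mathcal{A}$'s view, so the two have the same distribution. This holds even though $a_i$ is no longer an adversarially chosen shift in the strict sense; \cref{defn:shift-robust} allows any (possibly randomized) choice by the adversary, so $\mathcal{B}$ remains a legitimate \oShiftSSS{} adversary.
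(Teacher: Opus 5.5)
Your proposal is correct and follows the paper's proof essentially step for step: the same hop to uniform keys at cost $n\ell\varepsilon$, then the same reduction to \oShiftSSS{} reusing the oracle simulation from the confidentiality proof, with the same translation of $\vec{c'}$ into $\vec{a}$ (decryption on corrupted paths, a uniform shift on broken chains, the telescoping sum on fully relayed paths). Two small fixes: as the paper does, map an undelivered $c'_{i\ell}=\bot$ to $a_i=\bot$ before the broken-chain case, since otherwise a missing share is replaced by a random one and the simulation is no longer exact; and note that $q_{ij}$ does not mask only $c_{ij}$ but also enters $c_{i(j+1)}=c'_{ij}-q_{ij}+q_{i(j+1)}$, so the ciphertexts are jointly uniform because these equations are triangular in the keys, not because each ciphertext has its own private key.
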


\begin{proof}
We define game $G_2$ as equal to \oForgeRelay{} (\cref{fig:int-relay}),
except that we replace \cref{line:replace_rand} with the following so that all \gls{qkd} keys are sampled as uniformly random group elements:
$\qij{i}{j}{j+1} \getr \mathcal{G}$. 
Since individual QKD keys are $\varepsilon$-private against quantum side information (\cref{def:statprivacyqsideinfo}) and there are $n\ell$ keys,
it follows from \cref{lem:nQKDsystems} that
\begin{equation}\label{eq:int-hop1}
  |\Pr[\oForgeRelay(\Pi_r, \mathcal{A}) = 1] - \Pr[G_2 = 1]| \leq n \ell \varepsilon.
\end{equation}

We now show that $G_2$ is secure by giving a reduction $\mathcal{B}$, detailed in \cref{alg:integrity-reduction}: $\mathcal{B}$ is a \oShiftSSS{} adversary and simulates game $G_2$ against black-box adversary $\mathcal{A}$.
Note that the inputs to $\mathcal{A}$ are identical to that of \cref{alg:ind-reduction}:
$\vect{c}$ is sampled uniformly random and
the simulation of oracles \oRelay{} and \oCorruptR{} are identical.
However, when $\mathcal{A}_1$ returns a vector
$\vect{c'}$
on \cref{line:c_vec}, $\mathcal{B}_1$ needs to translate this into a vector
$\vect{a}$
for the \oShiftSSS{} challenger (\cref{line:returns_a}),
depending on the status of relaying and corruption per path.
First, missing ciphertexts are translated into missing shifts (\cref{line:ai_not}).
Second, ciphertexts on corrupted paths can simply be decrypted (\cref{line:ai_decrypts}).
Third, incomplete relays on uncorrupted paths will be decrypted with
a unused uniformly random key in $G_2$, 
resulting in a uniformly random shift (\cref{line:not_fully_relayed}).
Finally, in a complete end-to-end relay,
$\mathcal{B}_1$ computes the total shift (\cref{line:computes_total_shift})
as the sum over the path of differences between delivered and sent ciphertexts:
$
    a_i = \sum_{j=1}^{\ell} \left( \cpij{i}{j}{(j+1)} - \cij{i}{j}{(j+1)} \right)
$.

\begin{algorithm}
  \caption{\oIndSSS{} adversary $\mathcal{B} = (\mathcal{B}_0, \mathcal{B}_1)$,
  using $G_2$ adversary $\mathcal{A} = (\mathcal{A}_0, \mathcal{A}_1)$.}
  \label{alg:integrity-reduction}
  \begin{algorithmic}[1]
  \Procedure{$\mathcal{B}_0$}{\,}
    \State \Return $\mathcal{A}_0()$
  \EndProcedure
  \Procedure{$\mathcal{B}_1$}{\,}
    \State $\vect{c} \getr \mathcal{G}^n$
    \Comment{$
      \vect{c} = \big(\cij{1}{1}{2}, \dots, \cij{n}{1}{2}\big)$}
    \State $\vect{c'} \gets \mathcal{A}_1^{\oRelay(\cdot, \cdot, \cdot), \oCorruptR(\cdot, \cdot)}(\vect{c})$ 
    \Comment{$
      \vect{c'} = \big(\cpij{1}{\ell}{}, \dots, \cpij{n}{\ell}{} \big)
      $}\label{line:c_vec}
    \For{$1 \leq i \leq n$}
      \If{$\cpij{i}{\ell}{} = \bot$}
        \Comment{No relay message delivered}
        \State $a_i \gets \bot$\label{line:ai_not}
      \ElsIf{$\qij{i}{\ell}{} \neq \bot$}
        \Comment{$\oCorruptR(i)$ was called}
        \State $a_i \gets \cpij{i}{\ell}{} - \qij{i}{\ell}{}$\label{line:ai_decrypts}
      \ElsIf{$\exists j : 1 \leq j < \ell \textbf{ and } \cpij{i}{j}{j+1} = \bot$}
        \Comment{Not fully relayed}
        \State $a_i \getr \mathcal{G}$\label{line:not_fully_relayed}
      \Else
        \Comment{Fully relayed}
        \State $a_i \gets \sum_{j=1}^{\ell} ( \cpij{i}{j}{j+1} - \cij{i}{j}{j+1} )$
        \label{line:computes_total_shift}
      \EndIf
    \EndFor
    \State \Return $\vect{a}$
      \Comment{$\vect{a} = (a_0, \dots, a_{n-1})$}\label{line:returns_a}
  \EndProcedure
  \oQuery{\oRelay}{$i, j, \cpij{i}{(j-1)}{j}$} \dots{}
    \Comment{See \cref{alg:ind-reduction}, \cref{line:red-oRelay}}
  \oEndQuery
  \oQuery{\oCorruptR}{$i, j$} \dots{}
    \Comment{See \cref{alg:ind-reduction}, \cref{line:red-oCorrupt}}
  \oEndQuery
  \end{algorithmic}
\end{algorithm}

For the full relay on an uncorrupted path in $G_2$,
we have $\cij{i}{1}{2} = S_i + \qij{i}{1}{2}$ and $\cij{i}{j}{(j+1)} = \cpij{i}{(j-1)}{j} - \qij{i}{(j-1)}{j} + \qij{i}{j}{(j+1)}$ for $2 \leq j \leq \ell$.
Therefore we see that the simulated shift $a_i$ indeed encodes the end-to-end difference
between the sent and received share:
\begin{align*}
    S_i + a_i
    &= S_i + \sum_{j=1}^{\ell} \bigl(\cpij{i}{j}{j+1} - \cij{i}{j}{j+1}\bigr) \\
    &= S_i + \cpij{i}{1}{2} - \cij{i}{1}{2}
       + \sum_{j=2}^{\ell} \bigl(\cpij{i}{j}{j+1} - \cij{i}{j}{j+1}\bigr) \\
    &= S_i + \cpij{i}{1}{2} - (S_i + \qij{i}{1}{2})
       + \sum_{j=2}^{\ell} \bigl(\cpij{i}{j}{j+1} - (\cpij{i}{(j-1)}{j} - \qij{i}{(j-1)}{j} + \qij{i}{j}{j+1})\bigr) \\
    &= \cpij{i}{\ell}{} - \qij{i}{\ell}{} = S_i',
\end{align*}
where the fourth equality follows since we have a telescoping sum
and the final equality depicts decryption in $G_2$.

Then $\mathcal{B}$ wins the \oShiftSSS{} game exactly when $\mathcal{A}$ would win game $G_2$.
From this we conclude
\begin{equation}\label{eq:int-hop2}
  \Pr[G_2 = 1] = \Pr[\oShiftSSS(\Pi_s, \mathcal{B}) = 1].
\end{equation}

By \cref{eq:shift-robust},
we know that this probability is bounded by $\delta$.
Combining \cref{eq:shift-robust,eq:int-hop1,eq:int-hop2}, we get
\[ \Pr[\oForgeRelay(\Pi_r, \mathcal{A}) = 1] \leq n \ell \varepsilon + \delta. \qedhere \]
\end{proof}
\clearpage
\section{Implementation}\label{sec:code}
The script in \cref{lst:impl} performs secret sharing using the additive scheme and \gls{amd} encoding.
It is provided with the intention of assisting engineers in the practical deployment of robust secret sharing within trusted repeater networks in \gls{qkd} systems.
\begin{lstlisting}[style=sage,
caption={SageMath proof of concept implementation of AMD encoded additive secret sharing.
See 
\anonurl{https://github.com/sebastianv89/AMD-secret-sharing}
for a version
including unit-tests and dynamic parameter selection.},
label=lst:impl]
"""WARNING: PROOF OF CONCEPT, DO NO USE FOR SECURING REAL DATA!"""

# System parameters (example)
d = 3
F = GF(2^86)

def share(n, secret):
    """Return n shares, shared via additive sharing."""
    shares = [[F.random_element() for _ in range(len(secret))]
              for _ in range(n-1)]
    shares.append([secret[i] - sum(share[i] for share in shares)
                   for i in range(len(secret))])
    return shares

def recover(shares):
    """Recover the secret from the additive shares."""
    return [sum(share[i] for share in shares)
            for i in range(len(shares[0]))]

def tag(x, s):
    """Compute f(x, s), using Horner's method."""
    res = x
    for coef in s:
        res = x*res + coef # IMPORTANT: multiplication in GF(q)
    return x*res

def amd_encode(s):
    """Return (s, x, f(x, s)): the systematic encoding of s."""
    x = F.random_element()
    return s + [x, tag(x, s)]

def amd_decode(encoded):
    """Return s or None: the decoding of (s, x, y)."""
    *s, x, y = encoded
    if tag(x, s) == y:
        return s
    return None

def amd_share(n, secret):
    """Return n shares, shared via AMD encoded additive sharing."""
    return share(n, amd_encode(secret))

def amd_recover(shares):
    """Recover the secret from the AMD encoded additive shares."""
    return amd_decode(recover(shares))
\end{lstlisting}
\end{document}